\documentclass[11pt,a4paper]{article}
\usepackage{vmargin}
\setmarginsrb{1.2in}{1.2in}{1.2in}{1.2in}{0mm}{0mm}{5mm}{5mm}

\usepackage{amsfonts,amsmath,amssymb}
\usepackage{amsthm}
\usepackage{graphicx}
\usepackage{color}

\newtheorem{theorem}{Theorem}
\newtheorem{lemma}{Lemma}
\newtheorem{corollary}{Corollary}

\newtheorem{proposition}{Proposition}
\newtheorem{observation}[theorem]{Observation}
\newtheorem{remark}{Remark}

\DeclareMathOperator{\maxNE}{maxNE}
\DeclareMathOperator{\minNE}{minNE}
\DeclareMathOperator{\OPT}{OPT}
\DeclareMathOperator{\min12cds}{Min-(1,2)-CDS}
\DeclareMathOperator{\minmkcds}{Min-(m,k)-CDS}

\title{On Selfish Creation of Robust Networks\\{\small (full version)}}
\author{Ankit Chauhan\thanks{Algorithm Engineering Group, Hasso Plattner Institute Potsdam, Germany, \texttt{firstname.lastname@hpi.de}} \and Pascal Lenzner\footnotemark[1] \and Anna Melnichenko\footnotemark[1] \and Martin M\"unn\thanks{Department of Computer Science, University of Liverpool, UK, \texttt{M.F.Munn@liverpool.ac.uk}}}

\date{}
\begin{document}
\maketitle
\begin{abstract}
  \noindent Robustness is one of the key properties of nowadays networks. However, robustness cannot be simply enforced by design or regulation since many important networks, most prominently the Internet, are not created and controlled by a central authority. Instead, Internet-like networks emerge from strategic decisions of many selfish agents. Interestingly, although lacking a coordinating authority, such naturally grown networks are surprisingly robust while at the same time having desirable properties like a small diameter. 
  To investigate this phenomenon we present the first simple model for selfish network creation which explicitly incorporates agents striving for a central position in the network while at the same time protecting themselves against random edge-failure. We show that networks in our model are diverse and we prove the versatility of our model by adapting various properties and techniques from the non-robust versions which we then use for establishing bounds on the Price of Anarchy. Moreover, we analyze the computational hardness of finding best possible strategies and investigate the game dynamics of our model.             
\end{abstract}

\section{Introduction}
Networks are everywhere and we crucially rely on their functionality. Hence it is no surprise that designing networks under various objective functions is a well established research area in the intersection of Operations Research, Computer Science and Economics. However, investigating how to create suitable networks \emph{from scratch} is of limited use for understanding most of nowadays networks. The reason for this is that most of our resource, communication and online social networks have not been created and designed by some central authority. Instead, these critical networks emerged from the interaction of many selfish agents who control and shape parts of the network. This clearly calls for a game-theoretic perspective.

One of the most prominent examples of such a selfishly created network is the Internet, which essentially is a network of sub-networks which are each owned and controlled by Internet service providers (ISP). Each ISP decides selfishly how to connect to other ISPs and thereby balancing the cost for creating links (buying the necessary hardware and/or peering agreement contracts for routing traffic) and the obtained service quality for its customers. Interestingly, although the Internet is undoubtedly an important and critical infrastructure, there is no central authority which ensures its functionality if parts of the network fail. Despite this fact, the Internet seems to be robust against node or edge failures, which hints that a socially beneficial property like network robustness may emerge from selfish behavior.  

Modeling agents with a desire for creating a \emph{robust} network has long been neglected and was started to be investigated only very recently. 
This paper contributes to this endeavor by proposing and analyzing a model of selfish network creation, which explicitly incorporates agents who strive for occupying a central position in the network while at the same time ensuring that the overall network remains functional even under edge-failure.

\subsection{Related Work}
Previous research on game-theoretic models for network creation has either focused on  \emph{centrality models}, where the agents' service quality in the created network depends on the distances to other agents, or on \emph{reachability models} where agents only care about being connected to many other agents. 

Some prominent examples of centrality models for selfish network creation are \cite{JW96,Fab03,CP05,De07,ADHL13,MMO14,Ehs15}. They all have in common that agents correspond to nodes in a network and that the edge set of the network is determined by the combination of the agents' strategies. The utility function of an agent contains a service quality term which depends on the distances to all other agents. 
To the best of our knowledge the very recent paper by Meirom et al.~\cite{MMO15} is the only centrality model which incorporates edge-failures. The authors consider two types of agents, major-league and minor-league agents, which maintain that the network remains $2$-connected while trying to minimize distances, which are a linear combination of the length of a shortest path and the length of a best possible vertex disjoint backup path. Under some specific assumptions, e.g. that there are significantly more minor-league than major-league agents, they prove various structural properties of equilibrium networks and investigate the corresponding game-dynamics. In contrast to this, we will investigate a much simpler model with homogeneous agents which is more suitable for analyzing networks created by equal peers. Our results can be understood as zooming in on the sub-network formed by the major-league agents (i.e. top tier ISPs).        

In reachability models, e.g. \cite{BG00,Kli11,BGP15,FGMMM12,ABU15,GJKKM15}, the service quality of an agent is simply defined as the number of reachable agents and distances are ignored completely. For reachability models the works of Kliemann~\cite{Kli11,Kli13} and the very recent paper by Goyal et al.~\cite{GJKKM15} explicitly incorporate a notion of network robustness in the utility function of every agent. All models consider an external adversary who strikes after the network is built. In~\cite{Kli11,Kli13} the adversary randomly removes a single edge and the agents try to maximize the expected number of reachable nodes post attack. Two versions of the adversary are analyzed: edge removal uniformly at random or removal of the edge which hurts the society of agents most. For the former adversary a constant Price of Anarchy is shown, whereas for the latter adversary this positive result is only true if edges can be created unilaterally. In~\cite{GJKKM15} nodes are attacked (and killed) and this attack spreads virus-like to neighboring nodes unless they are protected by a firewall. Interestingly also this model has a low Price of Anarchy and the authors prove a tight linear bound on the amount of edge-overbuilding due to the adversary.      

\subsection{Model and Notation}
We consider the Network Creation Game (NCG) by Fabrikant et al.~\cite{Fab03} augmented with the uniform edge-deletion adversary from Kliemann~\cite{Kli11} and we call our model \emph{Adversary NCG} (Adv-NCG). More specifically, in an Adv-NCG there are $n$ selfish agents which correspond to the nodes of an undirected multi-graph $G = (V,E)$ and we will use the terms agent and node interchangeably.
A pure strategy $S_u$ of agent $u\in V(G)$ is any multi-set over elements from $V\setminus\{u\}$. If $v$ is contained $k$ times in $S_u$ then this encodes that agent $u$ wants to create $k$ undirected edges to node $v$. Moreover we say that $u$ is the owner of all edges to the nodes in $S_u$. We emphasize the edge-ownership in our illustrations by drawing directed edges which point away from their owner - all edges are nonetheless understood to be undirected. Given an $n$-dimensional vector of pure strategies for all agents, then the union of all the edges encoded in all agents' pure strategies defines the edge set $E$ of the multi-graph $G$. Since there is a bijection of multi-graphs with edge-ownership information and pure strategy-vectors, we will use networks and strategy-vectors interchangeably, e.g. by saying that a network is in equilibrium.       

The agents prepare for an adversarial attack on the network after creation. This attack deletes one edge uniformly at random. Hence, agents try to minimize the attack's impact on themselves by minimizing their \emph{expected cost}. Let $G-e$ denote the network $G$ where edge $e$ is removed. Let $\delta_G(u) = \sum_{v\in V(G)} d_G(u,v)$, where $d_G(u,v)$ is the number of edges of a shortest path from $u$ to $v$ in network $G$. Let $$dist_G(u) = \frac{1}{|E|}\sum_{e \in E} \delta_{G-e}(u) = \frac{1}{|E|} \sum_{e\in E}\sum_{v \in V} d_{G-e}(u,v)$$ denote agent $u$'s \emph{expected distance cost} after the adversary has removed some edge uniformly at random from $G$. The expected cost of an agent $u$ in network $G = (V,E)$ with edge-price $\alpha$ is defined as $cost_u(G,\alpha) = edge_u(G,\alpha) + dist_G(u)$, where $edge_u(G,\alpha) = \alpha \cdot |S_u|$ is the total edge-cost for agent $u$ with strategy $S_u$ in $(G,\alpha)$. Thus, compared to the NCG~\cite{Fab03}, the distance cost term is replaced by the expected distance cost with respect to uniform edge deletion.

Let $(G,\alpha)$ be any network with edge-ownership information. We call any strategy-change from $S_u$ to $S_u'$ of some agent $u$ a \emph{move}.  Specifically, if $|S_u| = |S_u'|$, then such a move is called a \emph{multi-swap}, if $|S_u \cap S_u'| < |S_u|-2$ and a \emph{swap} if $|S_u \cap S_u'| = |S_u|-2$. If a move of agent $u$ strictly decreases agent $u$'s cost, then it is called an \emph{improving move}. If no improving move exists, then we say that agent $u$ plays its \emph{best response}. Analogously we call a strategy-change towards a best response a \emph{best response move}. A sequence of best response moves which starts and ends with network $(G,\alpha)$ is called a \emph{best response cycle}. We say that $(G,\alpha)$ is in \emph{Pure Nash Equilibrium} (NE) if all agents play a best response.  

We measure the overall quality of a network $(G,\alpha)$ with its \emph{social cost}, which is defined as $cost(G,\alpha) = \sum_{u \in V(G)} cost_u(G,\alpha) = edge(G,\alpha) + dist(G)$, where $edge(G,\alpha) = \sum_{u \in V(G)} edge_u(G,\alpha) = \alpha\cdot|E|$ and $dist(G) = \sum_{u \in V(G)} dist_G(u)$. Let $OPT(n,\alpha)$ be a network on $n$ nodes with edge-price $\alpha$ which minimizes the social cost and we call $\OPT(n,\alpha)$ the \emph{optimum network} for $n$ and $\alpha$. Let $\maxNE(n,\alpha)$ be the maximum social cost of any NE network on $n$ agents with edge-price $\alpha$ and analogously let $\minNE(n,\alpha)$ be the NE having minimum social cost. Then, the \emph{Price of Anarchy} is the maximum over all $n$ and $\alpha$ of the ratio $\tfrac{\maxNE(n,\alpha)}{\OPT(n,\alpha)}$, whereas the \emph{Price of Stability} is the maximum over all $n$ and $\alpha$ of the ratio $\tfrac{\minNE(n,\alpha)}{OPT(n,\alpha)}$.

\subsection{Our Contribution}
This paper introduces and analyzes an accessible model, the Adv-NCG, for selfish network creation in which agents strive for a central position in the network while protecting against random edge-failures. 

We show that optimum networks in the Adv-NCG are much more diverse than without adversary, which also indicates that the same holds true for the Nash equilibria of the game. However, we also show that many techniques and results from the NCG can be adapted to cope with the Adv-NCG, which indicates that the influence of the adversary is limited. In particular, we prove NP-hardness of computing a best possible strategy and W[2]-hardness of computing a best multi-swap. Moreover, we show that the Adv-NCG is not weakly acyclic, which is the strongest possible non-convergence result for any game. On the positive side, we prove that the amount of edge-overbuilding due to the adversary is limited, which is then used for proving that upper bounding the diameter essentially bounds the Price of Anarchy from above. We apply this by adapting two diameter-bounding techniques from the NCG to the adversarial version which then yields an upper bound on the PoA of $\mathcal{O}(1+\alpha/\sqrt{n})$.

\section{Optimal Networks}\label{sec_OPT}
Clearly, every optimal network must be $2$-edge connected. Thus, every optimal network must have at least $n$ edges. 
We first prove the intuitive fact that if edges get more expensive, then the optimum networks will have fewer edges.
\begin{theorem}\label{thm_optedges}
 Let $(G = (V,E),\alpha)$ and $(G' = (V,E'),\alpha')$ be optimal networks on $n$ nodes in the Adv-NCG for $\alpha$ and $\alpha'$, respectively. If $\alpha' > \alpha$, then $|E|\geq |E'|$.  
\end{theorem}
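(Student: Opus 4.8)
The plan is to exploit the fact that, for a \emph{fixed} graph $H$, the social cost is an affine function of the edge price, namely $cost(H,\beta) = \beta\,|E(H)| + dist(H)$, where the distance term $dist(H)$ does not depend on $\beta$. An optimal network for a given price is thus the minimizer of the lower envelope of these affine functions, and the desired monotonicity says exactly that the slope of the active segment of this envelope, which is the number of edges, is nonincreasing in the price. I would make this rigorous by a direct exchange argument rather than by invoking concavity of the envelope.

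First I would record that $dist(G)$ and $dist(G')$ are finite. Since $G$ and $G'$ are optimal, the observation made just before the theorem guarantees they are $2$-edge-connected, so $G-e$ and $G'-e$ stay connected for every edge $e$, and hence every term $d_{G-e}(u,v)$ is finite. This also ensures that $cost(G',\alpha)$ and $cost(G,\alpha')$ are finite, so all the comparisons below are between finite quantities.

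Next I would write down the two optimality inequalities obtained by testing each optimum against the other. Since $G$ minimizes the social cost at price $\alpha$, comparing it with $G'$ gives
$$\alpha\,|E| + dist(G) \;\le\; \alpha\,|E'| + dist(G'),$$
and since $G'$ minimizes at price $\alpha'$, comparing it with $G$ gives
$$\alpha'\,|E'| + dist(G') \;\le\; \alpha'\,|E| + dist(G).$$
Rearranging the first yields $\alpha\,(|E|-|E'|) \le dist(G') - dist(G)$, while rearranging the second yields $dist(G') - dist(G) \le \alpha'\,(|E|-|E'|)$. Chaining these two bounds eliminates the distance terms and leaves $(\alpha'-\alpha)\,(|E|-|E'|) \ge 0$. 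As $\alpha' > \alpha$, the factor $\alpha'-\alpha$ is strictly positive, forcing $|E|-|E'| \ge 0$, that is, $|E| \ge |E'|$.

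The argument is essentially a one-line comparative-statics computation, so I do not expect a serious obstacle; the only point needing care is bookkeeping the direction of the inequalities when rearranging, and in particular ensuring that the distance terms cancel with the correct sign so that the price gap $\alpha'-\alpha$ appears with a positive coefficient. The structural fact doing all the work is simply that $dist(\cdot)$ is independent of the price, which is precisely why the two optimality inequalities can be combined to cancel it.
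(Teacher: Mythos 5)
Your proof is correct and is essentially the same argument as the paper's: both cross-compare the two optima at each other's edge price, cancel the distance terms, and conclude from $\alpha'>\alpha$ that $|E|\geq|E'|$. The only cosmetic difference is that you argue directly while the paper phrases it as a contradiction.
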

\begin{proof}
 We prove the statement by contradiction. Assume that $\alpha' > \alpha$ holds and that network $G' = (V,E')$ has strictly more edges than network $G=(V,E)$. Let $\Delta = |E'|-|E|$ denote this difference. 
 
 On the one hand, since $(G',\alpha')$ is an optimal network for edge price $\alpha'$, we have the social cost of $(G',\alpha')$ must be at most the social cost of $(G,\alpha')$. Thus we have 
 \begin{align*}
  cost(G',\alpha') &\leq cost(G,\alpha') \\
  \iff edge(G',\alpha') + dist(G') &\leq edge(G,\alpha') + dist(G)\\
  \iff edge(G',\alpha') - edge(G,\alpha') &\leq dist(G) - dist(G')\\
  \iff \Delta\alpha' &\leq dist(G) - dist(G').
 \end{align*}
On the other hand, since $(G,\alpha)$ is an optimal network for edge price $\alpha$, we have 
\begin{align*}
  cost(G,\alpha) &\leq cost(G',\alpha) \\
  \iff edge(G,\alpha) + dist(G) &\leq edge(G',\alpha) + dist(G')\\
  \iff dist(G) - dist(G') &\leq  edge(G',\alpha) - edge(G,\alpha)\\
  \iff dist(G) - dist(G') &\leq \Delta\alpha.
 \end{align*}
 Hence, we have 
  $\Delta\alpha' \leq dist(G) - dist(G') \leq \Delta\alpha$, which implies $\alpha' \leq \alpha$ which contradicts our assumption that $\alpha' > \alpha$. 
\end{proof}
\begin{remark}
 Note that the above proof works for any function $dist_G:V_G\rightarrow\mathbb{R}^+$, that is, in particular also for the NCG~\cite{Fab03}.
\end{remark}
\noindent In the following, we show that the landscape of optimum networks is much richer in the Adv-NCG, compared to the NCG where the optimum is either a clique or a star, depending on $\alpha$. In particular, we prove that there are $\Omega(n^2)$ different optimal topologies. We consider the following types of networks:
\begin{figure}[h!]
 \centering
 \includegraphics[width=\textwidth]{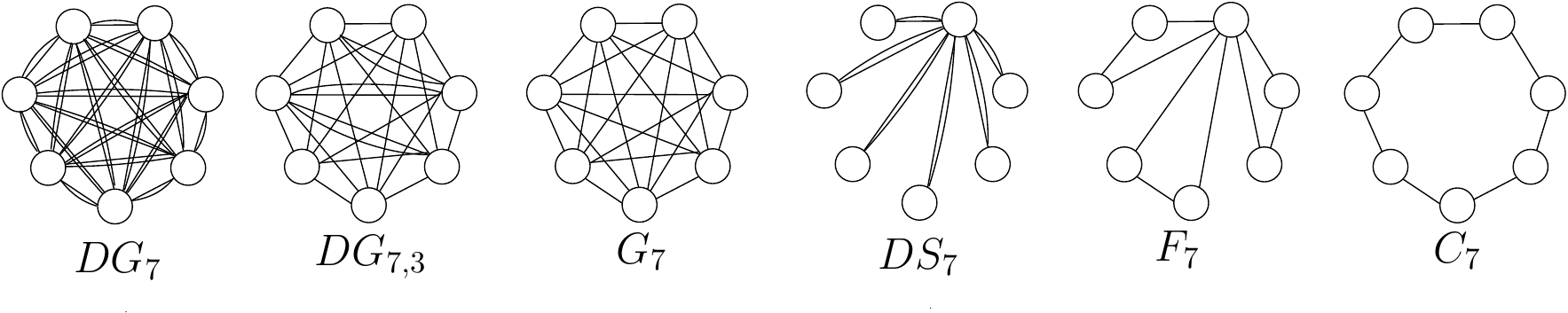}
 \caption{Different candidates for optimum networks.}
 \label{fig:graphtypes}
\end{figure}
Here $DG_n$ is a clique of $n$ nodes where we have a double edge between all pairs of nodes. Let $DG_{n,k}$ be a $n$ node clique with exactly $k$ pairs of nodes which are connected with double edges. Thus, $DG_{n,0} = G_n$ and $DG_{n,{n\choose 2}} = DG_n$. Moreover, let $F_n$ denote the fan-graph on $n$ nodes which is a collection of triangles which all share a single node and let $DS_n$ denote a star on $n$ nodes where all connections between the center and the leaves are double edges. Finally, let $C_n$ be a cycle of length~$n$. 

Clearly, if $\alpha = 0$, then the optimum network on $n$ nodes must be a $DG_n$, since in this network no edge deletion of the adversary has any effect, it minimizes the sum of distances of each agent and since edges are for free.

Now consider what happens, if one pair of agents, say $u$ and $v$, are just connected via a single edge instead of a double edge. The probability that the adversary removes this edge is $\frac{1}{n(n-1)-1}$. The removal would cause an increase in distance cost of $1$ between $u$ and $v$ and between $v$ and $u$. Thus, if $\alpha < \frac{2}{n(n-1)-1}$, then agent $u$ and $v$ would individually be better off buying another edge between each other. Thus, we have the following observation.
\begin{observation}\label{obs_opt}
 If $0 \leq \alpha \leq \frac{2}{n(n-1)-1}$, then $\OPT(n,\alpha) = DG_n$. 
\end{observation}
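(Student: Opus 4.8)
The plan is to prove the stronger statement that $cost(DG_n,\alpha) \le cost(G,\alpha)$ for \emph{every} network $G$ on $n$ nodes whenever $0\le\alpha\le\frac{2}{n(n-1)-1}$; since $\OPT(n,\alpha)$ is by definition a social-cost minimizer, this gives the claim. First I would record the cost of $DG_n$ itself. Every pair is joined by a double edge, so $|E|=n(n-1)$, and deleting any single edge leaves the graph complete; hence $\delta_{G-e}(u)=n-1$ for all $u$ and all $e$, so $dist(DG_n)=n(n-1)$ and $cost(DG_n,\alpha)=n(n-1)(1+\alpha)$.

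Next I would reduce to the relevant case. Any network that is not $2$-edge-connected has some $G-e$ disconnected and thus infinite cost, so $G$ may be assumed $2$-edge-connected; then $G-e$ is connected for every $e$ and $\sum_{u,v}d_{G-e}(u,v)\ge n(n-1)$, with equality iff $G-e$ is complete. Recalling $dist(G)=\frac{1}{|E|}\sum_e\sum_{u,v}d_{G-e}(u,v)$, this already yields $dist(G)\ge n(n-1)$, so any $G$ with $|E|\ge n(n-1)$ satisfies $cost(G,\alpha)\ge n(n-1)(1+\alpha)=cost(DG_n,\alpha)$ and we are done. It remains to treat $|E|<n(n-1)$.

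The heart of the argument is a lower bound on the distance overhead $dist(G)-n(n-1)$ measuring how far $G$ is from $DG_n$. I would classify each of the ${n\choose 2}$ pairs by its multiplicity $m_{uv}$: \emph{missing} if $m_{uv}=0$, \emph{single} if $m_{uv}=1$, \emph{protected} if $m_{uv}\ge 2$, and let $z,s$ be the numbers of missing and single pairs. A missing pair is non-adjacent in every $G-e$, contributing at least $2$ (over its two ordered pairs) to $\sum_{u,v}(d_{G-e}(u,v)-1)$ for all $|E|$ choices of $e$; a single pair becomes non-adjacent only when the adversary deletes its unique edge, contributing at least $2$ for exactly one value of $e$. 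Summing over $e$ and dividing by $|E|$ gives $dist(G)-n(n-1)\ge 2z+\tfrac{2s}{|E|}$. On the edge side, $|E|=\sum m_{uv}\ge s+2(\text{\#protected})$ combined with $z+s+(\text{\#protected})={n\choose 2}$ yields the deficit bound $D:=n(n-1)-|E|\le 2z+s$.

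Finally I would combine these. For $|E|<n(n-1)$ we have $|E|\le n(n-1)-1$, so $\tfrac{2s}{|E|}\ge\tfrac{2s}{n(n-1)-1}$, and for $n\ge 3$ we have $2z\ge\tfrac{4z}{n(n-1)-1}$; adding and using $D\le 2z+s$ gives $dist(G)-n(n-1)\ge\tfrac{2D}{n(n-1)-1}\ge\alpha D$, where the last step uses $\alpha\le\tfrac{2}{n(n-1)-1}$. Rearranging, $cost(G,\alpha)=\alpha|E|+dist(G)\ge\alpha n(n-1)+n(n-1)=cost(DG_n,\alpha)$, as required; the cases $n\le 2$ are immediate since $2$-edge-connectivity forces $|E|\ge n(n-1)$. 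I expect the distance lower bound to be the main obstacle: one must argue carefully that the contributions of missing and single pairs range over disjoint ordered pairs (so they genuinely add), and that parallel edges with $m_{uv}>2$, which waste edge budget without reducing distances, only strengthen the inequality — which is precisely what the deficit bound $D\le 2z+s$ encodes.
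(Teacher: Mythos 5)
Your proof is correct, but it takes a genuinely different route from the paper. The paper argues locally: it computes that downgrading one double edge of $DG_n$ to a single edge saves $\alpha$ in edge cost but increases the two affected agents' expected distance cost by $\tfrac{2}{n(n-1)-1}$ in total, so for $\alpha$ below this threshold the upgrade is (socially and individually) profitable; the observation is then presented as following from this exchange argument (and the machinery of the subsequent Lemma~\ref{lem_opt_range}, which shows the optimum has diameter $1$ and that upgrades of distinct pairs affect disjoint cost terms, is what would make it fully rigorous). You instead give a direct global comparison: after disposing of non-$2$-edge-connected networks and of networks with $|E|\geq n(n-1)$ via $dist(G)\geq n(n-1)$, you classify pairs as missing, single, or protected, derive the distance-overhead bound $dist(G)-n(n-1)\geq 2z+\tfrac{2s}{|E|}$ and the edge-deficit bound $n(n-1)-|E|\leq 2z+s$, and combine them to get $cost(G,\alpha)\geq cost(DG_n,\alpha)$ whenever $\alpha\leq\tfrac{2}{n(n-1)-1}$. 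This is more work but buys a self-contained proof valid against \emph{every} competitor network, including those with missing pairs or multiplicities above $2$, which the paper's local argument does not explicitly cover; the paper's approach is shorter and directly exhibits where the threshold $\tfrac{2}{n(n-1)-1}$ comes from. One cosmetic remark: your argument establishes that $DG_n$ is \emph{an} optimum (weak inequality throughout), which is all the statement needs at the boundary value of $\alpha$, where ties with $DG_{n,\binom{n}{2}-1}$ indeed occur.
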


\begin{lemma}\label{lem_opt_range}
 If $\frac{2n(n-1)}{({n\choose 2}+k)({n \choose 2}+k+1)} \leq \alpha \leq \frac{2n(n-1)}{({n\choose 2}+k)({n \choose 2}+k-1)}$, for $1 \leq k \leq {n \choose 2}-1$, then the network $DG_{n,k}$ is optimal.
\end{lemma}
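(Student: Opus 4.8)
The plan is to collapse the whole problem onto a single variable, the number of edges $m=|E|$, by first showing that the social cost of $DG_{n,k}$ depends only on $m=\binom{n}{2}+k$. First I would compute $dist(DG_{n,k})$ directly: every pair is at distance $1$ in the intact graph, so deleting one of the $2k$ edges of a double edge changes no distance, while deleting one of the $\binom{n}{2}-k$ single edges raises exactly two ordered distances by $1$. Averaging over the $m$ equally likely deletions gives $dist(DG_{n,k})=n(n-1)+\frac{2(n(n-1)-m)}{m}$, hence $cost(DG_{n,k},\alpha)=\alpha m+\frac{2n(n-1)}{m}+(n(n-1)-2)$, a convex function of $m$ which I will call $f(m)$.

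Next I would identify the stated $\alpha$-interval as exactly the window in which $m=\binom{n}{2}+k$ is the integer minimizer of $f$. Since $f$ is convex, $DG_{n,k}$ beats both $DG_{n,k-1}$ and $DG_{n,k+1}$ iff $f(m)\le f(m\pm 1)$, and writing these two inequalities out yields $\frac{2n(n-1)}{m(m+1)}\le\alpha\le\frac{2n(n-1)}{m(m-1)}$, which is precisely the claimed range. (The intervals for consecutive $k$ abut, so they tile a range of $\alpha$.) This settles the comparison inside the family $\{DG_{n,j}\}$ and also explains where the thresholds come from.

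The real content is to rule out every other network, and this is where I expect the main difficulty. Here I would establish a per-pair lower bound: for any $2$-edge-connected multigraph $G$ with $m$ edges and a fixed unordered pair $\{u,v\}$, the sum $\sum_e d_{G-e}(u,v)$ is at least $m$ if $u,v$ are joined by a double edge, at least $m+1$ if joined by a single edge, and at least $2m$ if non-adjacent (then $d_{G-e}(u,v)\ge 2$ for every $e$). Letting $a,b,c$ count the double, single, and non-adjacent pairs, with $a+b+c=\binom{n}{2}$, summing gives $dist(G)\ge n(n-1)+2c+\frac{2b}{m}$, while counting edges gives $m\ge 2a+b$, i.e. $b\ge n(n-1)-m-2c$.

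Finally I would minimize the lower bound $L(m,b,c)=\alpha m+n(n-1)+2c+\frac{2b}{m}$ over all feasible triples. The decisive step is a local exchange: turning a non-adjacent pair into a double edge sends $(m,b,c)\mapsto(m+2,b,c-1)$, preserves feasibility, and changes $L$ by $2\alpha-2-2b(\frac1m-\frac1{m+2})\le 0$ because one checks $\alpha\le 1$ throughout the given range; so the minimum is attained at $c=0$, where $L$ reduces to $f(m)$ for $m\le n(n-1)$ (larger $m$ only inflate $\alpha m + n(n-1)$, and smaller $b$ is always better, which also rules out superfluous parallel edges). Hence $cost(G)\ge\min_m f(m)=f(\binom{n}{2}+k)=cost(DG_{n,k},\alpha)$, giving optimality. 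The obstacle to watch is exactly this exchange argument: I must confirm it stays feasible along the whole chain down to $c=0$, and that the ties it can produce (at the interval endpoints, or against graphs with extra parallel edges) are harmless, since the lemma only asserts weak optimality.
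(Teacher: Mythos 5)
Your proof is correct, but for the hard part --- ruling out every competitor network --- it takes a genuinely different route from the paper's. Both arguments begin the same way: the cost of $DG_{n,k}$ collapses to the convex function $f(m)=\alpha m+\tfrac{2n(n-1)}{m}+n(n-1)-2$ of $m=\binom{n}{2}+k$, and the stated $\alpha$-window is exactly where $f(m)\le f(m\pm1)$. For the global step the paper argues locally: it first shows any optimum has diameter $1$ (since $\alpha<2-\tfrac{2}{|E|+1}$, joining a non-adjacent pair always pays off), concludes that the optimum is reachable from $DG_{n,k}$ by upgrading/downgrading edges, and then asserts that because a \emph{single} upgrade or downgrade does not decrease the social cost, no combination of them can. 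That composition step is the delicate point of the paper's proof: the adversary's distribution has denominator $|E|$, which changes with every modification, so ``one change does not help'' does not immediately iterate. Your per-pair lower bound ($\ge m$, $\ge m+1$, $\ge 2m$ summed over edge deletions for double, single, and non-adjacent pairs) sidesteps this entirely: it bounds $dist(G)$ from below for an \emph{arbitrary} $2$-edge-connected multigraph by $n(n-1)+2c+\tfrac{2b}{m}$, reduces optimality to minimizing $L$ over the region $b\ge n(n-1)-m-2c$, and your exchange $(m,b,c)\mapsto(m+2,b,c-1)$ --- which is indeed non-increasing and feasibility-preserving since $\alpha\le 4/\bigl(\binom{n}{2}+1\bigr)\le 1$ throughout the stated range for $n\ge 3$ --- pushes the minimum to $c=0$ and hence to $\min_m f(m)=f\bigl(\binom{n}{2}+k\bigr)$ by convexity. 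What your version buys is a fully rigorous, self-contained treatment of all competitors, including multigraphs with triple edges and incomplete underlying graphs, at the price of a somewhat longer combinatorial setup; the paper's version is shorter but leans on the informal ``one change suffices'' step and on the unstated restriction to edge multiplicities one and two.
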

\begin{proof}
Consider network $DG_{n,k}$ with k double edges. Let $u_i$ denote a node with exactly $i$ incident single edges. We have $$cost_G(u_i)=edge_G(u_i)+\frac{(|E|-i)(n-1)+i\cdot n}{|E|}=edge_G(u_i)+(n-1)+\frac{i}{|E|},$$ because the distance between $u_i$ and any other node increases only if the adversary deletes any of $u_i$'s incident $i$ single edges.
 The social cost of $DG_{n,k}$ is $cost(DG_{n,k})=\sum_{i=0}^{{n\choose 2}-k}{a_i\cdot cost_G(u_i)}$, where $a_i$ is the number of vertices having exactly $i$ incident single edges. Note, that $\sum_{i=0}^{{n\choose 2}-k}{a_i}=n$. 
Thus, $cost(DG_{n,k})$ is $$
\alpha|E|+\sum_{i=0}^{{n\choose 2}-k}{a_i}\left(n-1+\frac{i}{|E|}\right)
			  =\alpha|E|+n(n-1)+\frac{1}{|E|}\sum_{i=0}^{{n\choose 2}-k}{a_i\cdot i}.
$$
Now we simplify the above cost-function. Consider the induced sub-graph $G=(V,E')$ of network $DG_{n,k}$ which contains only the single edges of $DG_{n,k}$. By using the Handshake Lemma we obtain $\sum_{i=0}^{{n\choose 2}-k}{a_i\cdot i}=\sum_{u\in V}{deg_G (u)}=2|E'|=2({n\choose 2}-k)$, where $deg_G(u)$ is $u$'s degree in $G$.
Hence, we have $$cost(DG_{n,k})=\alpha|E|+n(n-1)+\frac{2({n\choose 2}-k)}{{n\choose 2}+k}.$$
Now, if look at the cost difference between $DG_{n,k-1}$ and $DG_{n,k}$ we get
$$cost(DG_{n,k}) - cost(DG_{n,k-1}) = \alpha -  \frac{2({n\choose 2}-k+1)}{{n\choose 2}+k-1}+\frac{2({n\choose 2}-k)}{{n\choose 2}+k}.$$ The cost difference between $DG_{n,k}$ and $DG_{n,k+1}$ is
$$cost(DG_{n,k}) - cost(DG_{n,k+1}) = -\alpha - \frac{2({n\choose 2}-k)}{{n\choose 2}+k}+\frac{2({n\choose 2}-k-1)}{{n\choose 2}+k+1}.$$
Thus, if $\alpha  = \alpha_k$ where $$\frac{2({n\choose 2}-k)}{{n\choose 2}+k}-\frac{2({n\choose 2}-k-1)}{{n\choose 2}+k+1} \leq\alpha_k \leq \frac{2({n\choose 2}-k+1)}{{n\choose 2}+k-1}-\frac{2({n\choose 2}-k)}{{n\choose 2}+k}, $$ then upgrading a single edge to a double or downgrading a double edge to a single in $DG_{n,k}$ does not decrease the social cost. Since $\alpha_k < \frac{8}{n^2} < 2 - \frac{2}{n} < 2 - \frac{2}{|E|+1}$, it follows that $\OPT(n,\alpha_k)$ has diameter $1$, since otherwise there are two agents $u$ and $v$ having expected distance at least $2$ between each other and inserting the edge $uv$ would decrease their expected distance to $1 + \frac{1}{|E|+1}$ which yields a decrease in social distance cost of at least $2 \left(2 - (1 + \frac{1}{|E|+1})\right) = 2 - \frac{2}{|E|+1}$. 

Since $OPT(n,\alpha_k)$ has diameter $1$ we know that we can obtain $OPT(n,\alpha_k)$ from $DG_{n,k}$ by either downgrading some double edges to single edges or by upgrading some single edges to double edges. We have chosen $\alpha_k$ such that downgrading one double edge to a single edge or upgrading one single edge to a double edge does not decrease the social cost. Since downgrading or upgrading edges only affects the distance costs of the incident agents it follows that if downgrading or upgrading one edge does not decrease the social cost, then downgrading or upgrading more than one edge cannot decrease the social cost as well.  
Thus, $DG_{n,k}$ is the optimal network for $\alpha_k$. 
\end{proof}
\noindent Note, that the proof of the above statement implies that the complete graph $G_n$ is an optimum, if $\frac{4}{{n\choose 2}+1} \leq \alpha < 2-\frac{2}{{n \choose 2}} $.

We also remark that we conjecture that Fig.~\ref{fig:graphtypes} resembles a snapshot of optimum networks for increasing $\alpha$ from left to right. In fact, extensive simulations indicate that the optimum changes from $G_n$ to $DS_n$ and then, for slightly larger $\alpha$ for $F_n$. After this the cycles in the fan-graph increase and get fewer in number until, finally, for $\alpha \in \Omega(n^3)$ the cycle appears as optimum. 

\section{Computing Best Responses and Game Dynamics}\label{sec_BR}
In this section we investigate computational aspects of the Adv-NCG. First we analyze the hardness of computing a best response and the hardness of computing a best possible multi-swap. Then we analyze a natural process for finding an equilibrium network by sequentially performing improving moves.
\subsection{Hardness of Best Response Computation}
We first introduce useful properties for ruling out multi-buy or multi-delete moves. The proof is similar to the proof of Lemma 1 in~\cite{Len12}. 
\begin{proposition}\label{pro1}
If an agent cannot decrease its expected cost by buying (deleting) one edge in the Adv-NCG, then buying (deleting) $k > 1$ edges cannot decrease the agent's expected cost.
\end{proposition}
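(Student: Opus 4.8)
The plan is to prove both directions (buying and deleting) by the same mechanism: reduce the multi-edge statement to a telescoping sum of single-edge marginals, and bound each marginal by the single-edge assumption using a \emph{diminishing-returns} (submodularity) property of the expected distance cost. I treat buying in detail; deleting is symmetric. Fix agent $u$ in $(G,\alpha)$ with $m=|E|$, and suppose $u$ cannot strictly decrease its cost by buying any single edge. Let $F=\{f_1,\dots,f_k\}$ with $k\ge 2$ be the edges $u$ buys, all incident to $u$ by definition of a strategy, where $f_i$ connects $u$ to $x_i$. Since $edge_u$ grows by exactly $\alpha$ per edge, it suffices to show that the total expected-distance benefit $B(F):=dist_G(u)-dist_{G+F}(u)$ is at most $k\alpha$; then $cost_u(G+F,\alpha)\ge cost_u(G,\alpha)$.

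First I would order the edges and telescope along the chain $G=G_0\subseteq G_1\subseteq\dots\subseteq G_k=G+F$ with $G_i=G+\{f_1,\dots,f_i\}$, writing $B(F)=\sum_{i=1}^k\bigl(dist_{G_{i-1}}(u)-dist_{G_i}(u)\bigr)$ as a sum of single-edge marginals. The single-edge assumption says that buying $f_i$ \emph{in $G$} is not improving, i.e. $dist_G(u)-dist_{G+f_i}(u)\le\alpha$. The bridge I need is the claim that the marginal benefit of adding one $u$-incident edge is \emph{monotonically non-increasing} in the current edge set, so that $dist_{G_{i-1}}(u)-dist_{G_i}(u)\le dist_G(u)-dist_{G+f_i}(u)$. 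Granting this, each of the $k$ marginals is at most $\alpha$, whence $B(F)\le k\alpha$. The deletion direction uses the identical monotonicity read along the chain $G-F\subseteq\dots\subseteq G$: each marginal is now bounded \emph{below} by the single-delete assumption (harm of one deletion is at least $\alpha$), giving total harm at least $k\alpha$.

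The heart of the argument, and the reason it resembles Lemma~1 of \cite{Len12}, is a per-scenario structure yielding this diminishing-returns claim. In any fixed graph $H$, because every bought edge is incident to $u$, a shortest path from $u$ to a node $w$ uses at most one edge of $F$; hence $d_{H+F}(u,w)=\min\bigl(d_H(u,w),\ \min_i(1+d_H(x_i,w))\bigr)$, so the per-node saving equals the \emph{maximum} of the individual single-edge savings. As $F\mapsto\max_{f\in F}(\text{saving from }f)$ has non-increasing marginals, the unnormalized saving $\delta_H(u)-\delta_{H+F}(u)$ is submodular in $F$, and this holds simultaneously in every adversary scenario $H=G-e$.

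The main obstacle is the normalization by $|E|$ together with the fact that the bought edges enlarge the adversary's target set: $dist_{G+F}(u)$ averages over $m+k$ scenarios (including the new ones $G+F-f_i$, in which $u$ retains $F\setminus\{f_i\}$) and divides by $m+k$ rather than $m$. A ratio of submodular quantities with a moving denominator need not be submodular, so the clean per-scenario submodularity does not transfer to $dist$ verbatim. To push it through I would expand each marginal $dist_{G_{i-1}}(u)-dist_{G_i}(u)$ explicitly, separating the contribution of the $m$ original deletion scenarios (where the per-scenario submodularity applies directly) from the $i$ new scenarios created by the freshly bought edges, and then verify that the denominator change from $m+i-1$ to $m+i$ only weakens the marginal relative to its value in the baseline graph $G$. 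Balancing these two effects is the one genuinely computational step; once it is shown that neither the new scenarios nor the growing denominator ever increase a marginal beyond its value in $G$, the telescoping bound $B(F)\le k\alpha$ follows and the proposition is proved.
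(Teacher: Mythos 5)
Your plan correctly isolates the crux. The per-scenario (unnormalized) saving $\delta_H(u)-\delta_{H+F}(u)$ is indeed subadditive in $F$, since every shortest path from $u$ uses at most one new $u$-incident edge, and the only real question is whether this survives the renormalization by $|E|$ and the new adversary scenarios. Unfortunately, the step you defer as ``the one genuinely computational step'' is exactly where the argument breaks, and it cannot be repaired: the marginal benefit of a second new edge can strictly \emph{exceed} its stand-alone benefit, because of a hedging effect in the new scenarios. When $u$ buys a single edge $f_1$, the scenario in which the adversary deletes $f_1$ returns the original graph, so a $\tfrac{1}{m+1}$ fraction of $f_1$'s saving is always forfeited; when $u$ buys $f_1$ and $f_2$ serving disjoint sets of targets, the scenario ``delete $f_1$'' still enjoys the full saving of $f_2$ and vice versa, and this recovered probability mass (of order $\sigma/m^2$, where $\sigma$ is the saving in the intact graph) outweighs the dilution caused by the denominator growing from $m+1$ to $m+2$. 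So the inequality $dist_{G_{i-1}}(u)-dist_{G_i}(u)\le dist_G(u)-dist_{G+f_i}(u)$ is false in general; in fact Proposition~\ref{pro1} itself fails as stated. (The paper gives no proof, only a pointer to Lemma~1 of~\cite{Len12}, and that proof does not transfer for precisely the reason you flagged.)

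Concretely, let $V=\{u,p_1,q_1,p_2,q_2\}$ and let $E$ consist of the four double edges $\{u,p_1\},\{p_1,q_1\},\{u,p_2\},\{p_2,q_2\}$, so $m=8$ and, since every edge has a parallel twin, $\delta_{G-e}(u)=\delta_G(u)=6$ for every $e$, i.e.\ $dist_G(u)=6$. The only useful single purchases for $u$ are $f_i=\{u,q_i\}$; each gives $dist_{G+f_i}(u)=\tfrac{8\cdot 5+6}{9}=\tfrac{46}{9}$, a gain of $\tfrac{8}{9}$, while edges to $p_1,p_2$ gain nothing. Buying both gives $dist_{G+f_1+f_2}(u)=\tfrac{8\cdot 4+5+5}{10}=\tfrac{21}{5}$, a gain of $\tfrac{9}{5}>\tfrac{16}{9}$. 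Hence for every $\alpha$ with $\tfrac{8}{9}\le\alpha<\tfrac{9}{10}$, no single purchase decreases $u$'s cost but buying $\{f_1,f_2\}$ does. Read backwards, the same example kills the deletion direction: in $G+f_1+f_2$ with $S_u=\{q_1,q_2\}$ (all other edges owned by $p_1,p_2$), deleting one of $f_1,f_2$ raises $dist$ by $\tfrac{46}{9}-\tfrac{21}{5}=\tfrac{41}{45}$, while deleting both raises it only by $\tfrac{9}{5}=\tfrac{81}{45}<2\cdot\tfrac{41}{45}$, so for $\tfrac{9}{10}<\alpha\le\tfrac{41}{45}$ no single deletion is improving but deleting both is. Your telescoping framework is the right way to organize the computation, but the conclusion it is meant to deliver is simply not true without extra hypotheses (e.g.\ that the single-edge condition holds with slack at least of order $\max_f\sigma_f/m$), and any downstream use of Proposition~\ref{pro1}, such as in Lemma~\ref{lemBR}, needs such a strengthening.
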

\begin{lemma}
\label{lemBR}
If $1-\frac{1}{|E|+1}<\alpha<1 + \frac{1}{|E|(|E|-1)}$ and if agent $u$ is not an endpoint of any double-edge in the Adv-NCG, then buying the minimum number of edges such that $u$'s expected distance to all nodes in $V\setminus N_u$ is 2 and to nodes in $N_u$ is $1+\frac{1}{|E|}$ is $u$'s best response.
\end{lemma}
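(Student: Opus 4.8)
The plan is to prove that the described strategy, which I denote $S^*$, is a best response by ruling out every improving move, using Proposition~\ref{pro1} to cut down the moves that must be considered. First I would write the cost of $S^*$ in closed form. Because $u$ owns only single edges, a neighbour $v\in N_u$ is at distance $1$ unless the adversary hits the single edge $uv$, which happens with probability $\frac{1}{|E|}$ and raises the distance to $2$ through the guaranteed backup, so $v$ contributes $1+\frac{1}{|E|}$; a node in $V\setminus N_u$ contributes the minimum possible value $2$, attainable precisely because $S^*$ keeps every such node at \emph{robust} distance $2$, i.e.\ at distance $2$ after the deletion of any single edge. Summing gives $dist_G(u)=2(n-1)-|N_u|+\frac{|N_u|}{|E|}$, hence $cost_u=\alpha|N_u|+2(n-1)-|N_u|+\frac{|N_u|}{|E|}$. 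With these formulas it remains to check that neither a single purchase nor a single deletion lowers $cost_u$: Proposition~\ref{pro1} reduces pure multi-buy and multi-delete moves to the single-edge case, and a genuine swap I would handle directly by adding the purchase- and deletion-estimates, noting that a swap leaves $|E|$ unchanged.

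For purchases there are two subcases. Doubling an existing edge $uv$ only lowers $dist(u,v)$ from $1+\frac{1}{|E|}$ to $1$, a gain of order $\frac{1}{|E|}$, which the added cost $\alpha\approx 1$ dominates for every $\alpha$ in the window. Buying an edge to a non-neighbour $z$ lowers $dist(u,z)$ from $2$ to $1+\frac{1}{|E|+1}$, a gain just below $1$, and the lower bound $\alpha>1-\frac{1}{|E|+1}$ is the inequality that controls this case. For deletions I would use the edge-minimality of $S^*$: every edge incident to $u$ serves either as the unique short backup for one of $u$'s neighbours or as one of the two length-$2$ detours that certify robust distance $2$ to some non-neighbour. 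Consequently deleting any single edge must break robustness somewhere, forcing $dist(u,v)$ or the distance to some non-neighbour strictly above $2$, and in the worst case disconnecting $u$ under some adversarial deletion (an unbounded penalty). The resulting distance increase exceeds the saved cost $\alpha$ as long as $\alpha$ obeys the upper bound $\alpha<1+\frac{1}{|E|(|E|-1)}$.

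The genuinely delicate point, which I expect to be the main obstacle, is that editing $u$'s edge set also changes the \emph{total} edge count $|E|$; since each of the $|N_u|$ single-edge neighbours contributes $1+\frac{1}{|E|}$, a unit change in $|E|$ shifts all of these contributions simultaneously. This secondary effect is of order $\frac{|N_u|}{|E|(|E|\pm 1)}$ and must be propagated through every estimate exactly, since it is what pins down the precise endpoints of the price window rather than merely the coarse statement $\alpha\approx 1$. The complementary difficulty is structural: I must argue from minimality that no deletion can leave all backup paths intact, and verify that the purchase- and deletion-estimates combine in the right direction for swaps. Once the secondary term is tracked and the robustness bookkeeping is in place, assembling the buy-, delete- and swap-estimates shows that $cost_u$ cannot strictly decrease, which establishes that $S^*$ is $u$'s best response.
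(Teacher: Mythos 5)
Your proposal follows essentially the same route as the paper's proof: compute the expected distances in the target configuration, rule out single purchases via the lower bound $\alpha > 1-\frac{1}{|E|+1}$, rule out single deletions via edge-minimality together with the upper bound on $\alpha$, treat swaps by combining the two estimates, and invoke Proposition~\ref{pro1} to dispose of multi-edge moves. The one computation you defer --- and which the paper makes explicit --- is that deleting the edge to a neighbour $v$ raises $u$'s expected distance to $v$ to exactly $2+\frac{1}{|E|-1}$, i.e.\ an increase of $1+\frac{1}{|E|(|E|-1)}$ over $1+\frac{1}{|E|}$, which is precisely where the upper endpoint of the price window comes from; your ``secondary effect'' of the changing edge count is exactly this term.
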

\begin{proof}
Consider any network $(G,\alpha)$ where $u$'s expected distance to all nodes in $V\setminus N_u$ is $2$ and to all nodes in $N_u$ is $1 + \frac{1}{|E|}$. 

Buying an additional edge to some $v\in N_u$ in $G$ decreases $u$'s expected distance to $v$ by $\frac{1}{|E|}$. Buying an edge towards a node $w \notin N_u$ decreases $u$'s expected distance to $w$ by $1-\frac{1}{|E|+1} > \frac{1}{|E|}$. Thus if $\alpha >1-\frac{1}{|E|+1}$, then buying a single edge does not decrease $u$'s expected cost. Thus by Proposition~\ref{pro1}, agent $u$ cannot improve its expected cost in $G$ by buying more than one edge.

Swapping an edge to some $v\in N_u$ decreases $u$'s expected distance to $v$ by $\frac{1}{|E|}$ but increases $u$'s expected distance to some $w\in N_u$ by $1-\frac{1}{|E|}$. Swapping an edge towards a node $w \notin N_u$ decreases $u$'s expected distance to $w$ by $1-\frac{1}{|E|}$ but increases the expected distance to $w\in N_u$ by at least $1-\frac{1}{|E|}$.

If $u$ has bought the minimum number of edges such that $u$'s expected distance to all nodes in $V\setminus N_u$ is $2$ and to all nodes in $N_u$ is $1 + \frac{1}{|E|}$, then
deleting an edge from some $v\in N_u$ increases $u$'s expected distance to $v$ by at least $1 + \frac{1}{|E|(|E|-1)}$ since after deleting the edge the expected distance between $u$ and $v$ is $2+\frac{1}{|E|-1}$. Thus, if $\alpha< 1 + \frac{1}{|E|(|E|-1)}$, then deleting a single edge does not decrease $u$'s expected cost. Thus, by Proposition~\ref{pro1}, agent $u$ cannot decrease its expected cost by deleting more than one edge.
\end{proof}
\noindent Now we show that computing the best possible strategy-change is intractable.
\begin{theorem}\label{thm_BRhardness}
~
\begin{itemize}
\item[1.] It is NP-hard to compute the best response of agent $u$ in the Adv-NCG.  
\item[2.] It is $W[2]$-hard to compute the best multi-swap of agent $u$ in the Adv-NCG. 
\end{itemize}
\end{theorem}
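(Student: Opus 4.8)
The plan is to prove both parts by reduction from fault-tolerant domination problems, using Lemma~\ref{lemBR} as the bridge between best responses and domination. Concretely, for the NP-hardness I would reduce from the minimum $(1,2)$-connected dominating set problem $\min12cds$, and for the $W[2]$-hardness from the parameterized minimum $(m,k)$-connected dominating set problem $\minmkcds$, taking as parameter the number of purchased (resp.\ swapped) edges. The first step is therefore to fix, or separately establish, the hardness of these domination problems: $\min12cds$ is NP-hard and $\minmkcds$ is $W[2]$-hard parameterized by the solution size, both following from standard reductions from \textsc{Dominating Set}/\textsc{Set Cover}.

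Next I would describe the gadget. Given an instance graph $H = (V_H, E_H)$, I build an Adv-NCG network on node set $V_H \cup \{u\}$ by turning every edge of $H$ into a double edge, so that the only deletable single edges are those that $u$ will purchase. I then choose the edge-price $\alpha$ inside the window $\left(1 - \tfrac{1}{|E|+1},\, 1 + \tfrac{1}{|E|(|E|-1)}\right)$ required by Lemma~\ref{lemBR}; to make this window well-defined and essentially independent of $u$'s move I pad the construction with additional robust (double) edges so that $|E|$ is large relative to $n$ and varies only negligibly with $|S_u|$. Since $u$ buys only single edges it is never an endpoint of a double edge, so Lemma~\ref{lemBR} applies and tells us that $u$'s best response is to buy a \emph{minimum} set $S$ of single edges such that $u$'s expected distance is $2$ to every node outside $N_u = S$ and $1 + \tfrac{1}{|E|}$ to every node in $S$.

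The heart of the argument is decoding these expected-distance conditions into domination conditions. Because all edges of $H$ are double, the only way a single deletion can raise $d_{G-e}(u,w)$ above $2$ is the removal of a purchased edge $us$; hence the expected distance to an outside node $w$ equals $2$ exactly when $w$ has at least two neighbors in $S$ (the ``$2$'' of the $(1,2)$-condition), and the expected distance to a node $s \in S$ equals $1 + \tfrac{1}{|E|}$ exactly when $s$ retains a length-$2$ backup after deletion of $us$, i.e.\ when $s$ has at least one neighbor in $S$ (the ``$1$''). Thus the feasible strategies of the Lemma's form are precisely the $(1,2)$-dominating sets of $H$, and since the coefficient of $|S|$ in $cost_u$ equals $\alpha - 1 + \tfrac{1}{|E|} > 0$, the cheapest such strategy is a \emph{minimum} one. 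Computing $u$'s best response therefore solves $\min12cds$, giving NP-hardness; restricting $u$ to multi-swaps with a fixed budget $k$ and reducing from the parameterized $\minmkcds$ yields $W[2]$-hardness with parameter $k$.

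I expect the main obstacle to be the uniform control of $|E|$: since $|E|$ depends on how many edges $u$ actually buys, one must verify that the chosen $\alpha$ stays inside the window of Lemma~\ref{lemBR} and that the expected-distance thresholds decode into the \emph{same} $(1,2)$-conditions across all candidate strategies simultaneously. Padding with $\Theta(\text{poly}(n))$ robust edges should absorb the $O(n)$ variation in $|S|$, but checking the exact arithmetic of the deletion-averaged distances --- in particular that the backup-path requirement for nodes in $S$ forces the ``$1$'' and neither a weaker nor a stronger bound --- is the delicate part and must be carried out carefully for both the unbounded-buy setting and the fixed-budget multi-swap setting.
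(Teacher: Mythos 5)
Your proposal is essentially the paper's own proof: both parts are obtained by reducing from the minimum $(1,2)$-connected dominating set problem (whose NP- and W[2]-hardness the paper establishes in Theorem~\ref{Thm_Hardness} via a universal-vertex reduction from \textsc{Dominating Set}), using Lemma~\ref{lemBR} to identify agent $u$'s best response with a minimum such set, and invoking the budgeted version of the same argument for the parameterized claim. The only point to tidy up is your decoding of the ``$1$'': ``every $s\in S$ has a neighbour in $S$'' only gives minimum degree one in the induced subgraph $G[S]$, whereas the problem being reduced from requires $G[S]$ to be ($1$-)connected, which is the property the paper actually uses to guarantee expected distance $1+\frac{1}{|E|}$ to nodes of $S$; either tighten the decoding to connectivity or observe that the min-degree-one variant is hard by the same universal-vertex reduction, so the argument goes through either way.
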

\begin{proof}
We prove both statements by reduction from \textsc{Minimum-$m$-connected $k$-dom\-ina\-ting set}($\minmkcds$)~\cite{Shang07} which is defined as follows:  Given a graph $G=(V,E)$ and two natural numbers $m$ and $k$, find a subset $S\subseteq V$ of minimum size such that every vertex in $V/S$ is adjacent to at least $k$ vertices in $S$ and the induced sub-graph of $S$ is $m$-connected. 

\noindent (1) For the hardness we give the reduction by $\min12cds$ which is a NP-hard problem as a consequence of Theorem~\ref{Thm_Hardness}. 
Given the configuration of rest of the graph, agent $u$ has to pick subset of vertices such that $cost_G(u)$ is minimized. For any $1-\frac{1}{|E|+1}<\alpha<1 + \frac{1}{|E|(|E|-1)}$  and if there are no other agent has bought an edge to $u$, then, by Lemma~\ref{lemBR}, the best response is to buy edges to all the nodes in the $S$. In that case the expected eccentricity of $u$ will be at most 2. Since every node $w\notin S$ is adjacent to at least two vertices in $S$, the distance to $u$ cannot increase due to single edge-deletion. Since $S$ is connected, $u$ has expected distance at most $1+\frac{1}{|E|}$ to all nodes in $S$.

\noindent (2) Consider that agent $u$ has budget $b$, then it follows from Lemma~\ref{lemBR} and the above proof of (1), that the best response will be to create edges to the vertices in $S$ of size at most $b$. By Theorem~\ref{Thm_Hardness} (See Appendix) $\min12cds$ parameterized by the size of the set $S$ is W[2]-hard. 

Now, if there exists a $\min12cds$ $S$ of size at most $b$ then $S$ must be a subset of $u$'s best response. Hence if we know the best response of agent $u$ then we can solve the parameterized version of $\min12cds$ of $G$ by checking all possible subsets of $u$'s best response in $2^b\mathcal{O}(n)$ time.  
\end{proof}

\subsection{Game Dynamics}
We investigate the dynamic properties of the Adv-NCG. That is, we turn the model into a sequential version which starts with some initial network $(G,\alpha)$ and then agents move sequentially in some order and perform improving moves, if possible. One natural question is, if this process is guaranteed to converge to a Nash equilibrium of the game.     

For the game dynamics of the Adv-NCG we prove the strongest possible negative result, which essentially shows that there is no hope for convergence if agents stick to performing improving moves only. In particular, we prove that the order of the agents moves or any tie-breaking between different improving moves does not help for achieving convergence. This result is even stronger than the best known non-convergence results for the NCG~\cite{KL13}.  
\begin{theorem}\label{thm_notWA}
 The $Adv-NCG$ is not weakly acyclic. 
\end{theorem}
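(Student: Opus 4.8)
The plan is to refute weak acyclicity directly by exhibiting a single starting configuration from which \emph{no} sequence of improving moves ever reaches a Nash equilibrium. Recall that a game is weakly acyclic precisely when from every state there is a finite improving-move sequence terminating in an NE. Hence it suffices to produce, for some fixed $n$ and edge-price $\alpha$, a nonempty set $\mathcal{C}$ of networks and an initial network $(G_0,\alpha)\in\mathcal{C}$ such that (i) $\mathcal{C}$ is closed under improving moves --- every improving move available in any $(G,\alpha)\in\mathcal{C}$ produces a network that again lies in $\mathcal{C}$ --- and (ii) no network in $\mathcal{C}$ is in NE. Any improving-move sequence started in $\mathcal{C}$ then stays in $\mathcal{C}$ forever and can never stabilize, so the game is not weakly acyclic. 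This is strictly stronger than exhibiting a best-response cycle as in the NCG~\cite{KL13}: a cycle only shows that \emph{some} improving path fails to converge, whereas the trap $\mathcal{C}$ rules out \emph{all} of them, irrespective of the agents' move order and of any tie-breaking among improving moves.

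For the construction I would work in the delicate price regime identified in Lemma~\ref{lemBR}, namely $\alpha$ inside a window of the form $1-\Theta(1/|E|)<\alpha<1+\Theta(1/|E|^2)$, where the decision whether to keep, buy, or reroute a single edge is a close call and small structural changes flip an agent's incentives. The network would consist of a small ``turbulent core'' of a constant number of competing agents together with a rigid periphery, chosen so that for every peripheral agent and every core configuration in $\mathcal{C}$ that agent's current strategy is already a best response (so peripheral agents never move). The set $\mathcal{C}$ is then described by the finitely many admissible configurations of the core, and one shows that the core agents are perpetually in conflict: whatever the current core configuration, at least one core agent strictly benefits from a buy, delete, or swap, and every such profitable move only transforms the core into another configuration of $\mathcal{C}$.

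Verifying these two properties is the bulk of the work and the main obstacle. Property (ii), NE-freeness, is comparatively easy: for each configuration in $\mathcal{C}$ I exhibit one explicit improving move. Property (i), closure, is the hard part, because I must rule out \emph{every} improving deviation that would escape the trap, not merely follow one cycle --- and each such check is an expected-cost comparison that averages the relevant distances over all $|E|$ possible edge-deletions of the adversary. The leverage that keeps this finite and tractable is twofold: Proposition~\ref{pro1} reduces any multi-buy or multi-delete to the single-edge case, so only single buys, single deletes, and (multi-)swaps need to be examined; and the constant-size core with a locked periphery bounds the number of structurally distinct deviations one must analyze. I would organize the closure proof as a case distinction over the type of core configuration, and for each type tabulate the available improving moves together with their images in $\mathcal{C}$, using the adversary-averaged distance computations in the style of Lemma~\ref{lemBR} to certify which moves are improving and to confirm that none of them leaves $\mathcal{C}$.
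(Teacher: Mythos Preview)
Your high-level strategy is exactly the paper's: exhibit a finite ``trap'' $\mathcal{C}$ that is closed under all improving moves and contains no NE, so that every improving-move sequence started inside $\mathcal{C}$ is infinite. The paper realizes this as a six-state best-response cycle on $10$ agents with $\alpha=10.3$, engineered so that in every state precisely one agent has precisely one improving move; uniqueness of the improving move is what guarantees closure, which is your property~(i).

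Where you diverge is in the concrete parameters and the verification. You aim for the $\alpha\approx 1$ window of Lemma~\ref{lemBR} and a core/periphery architecture with a hand case-analysis leaning on Proposition~\ref{pro1}; the paper instead picks $\alpha=10.3$ with no periphery and certifies closure by exhaustive computer search over all strategy changes. Two cautions about your plan: first, Proposition~\ref{pro1} only controls pure multi-buys and pure multi-deletes, not arbitrary strategy changes that mix additions, removals, and swaps, so it does not by itself bound the deviation space you must check for closure; second, the Lemma~\ref{lemBR} regime is tuned to a very specific best-response shape (expected eccentricity~$2$ via a $\min12cds$), and it is not evident that this regime supports the perpetual core conflict you need---the paper's working example lives in a quite different $\alpha$ range. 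Your plan is sound in outline, but the proof ultimately rests on producing one explicit instance and verifying uniqueness of the improving move at every state, which is what the paper does (and defers to a script for).
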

\begin{proof}[Proofsketch]
We prove the statement by giving a best response cycle $(G_1,\alpha),\dots,(G_7,\alpha)$, where $(G_1,\alpha) = (G_7,\alpha)$ and $(G_{i+1},\alpha)$ is obtained from $(G_i,\alpha)$ by an improving move of one agent in $(G_i,\alpha)$. Our best response cycle has the special features that in every step of the cycle there is exactly one agent who can perform an improving move and that this improving move is unique. Thus, starting with $(G_1,\alpha)$, any sequence of improving moves must be infinite. The best response cycle on $10$ agents with $\alpha = 10.3$ is depicted in Fig.~\ref{fig:model1_notweaklyacyclic}. 
\begin{figure}[h!]
 \centering
 \includegraphics[width=0.9\textwidth]{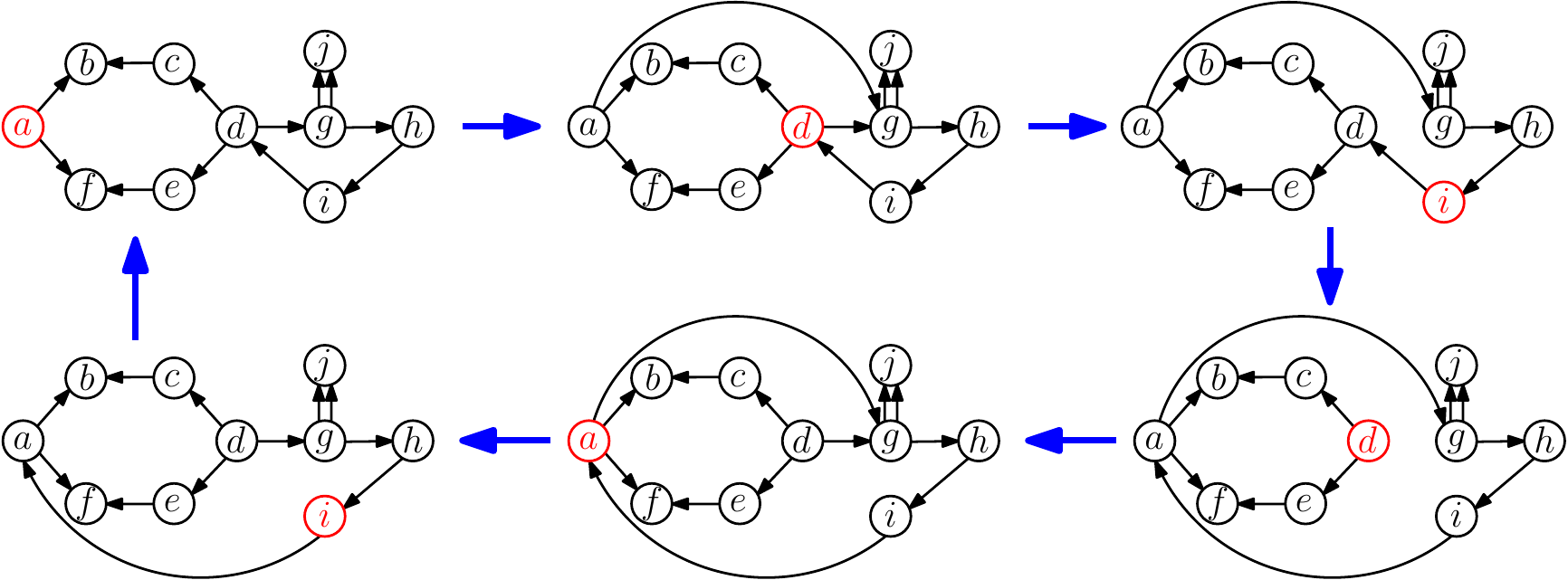}
 \caption{A best response cycle for $\alpha = 10.3$, where in every step only the red agent has an improving move and this improving move is unique.}
 \label{fig:model1_notweaklyacyclic}
\end{figure}
We omit the quite lengthy proof that the shown best response cycle behaves as indicated. The interested reader can get more details and a Python-script performing an exhaustive search on all possible strategy-changes from the authors.  
\end{proof}

\section{Analysis of Networks in Nash Equilibrium}\label{sec_NE}
In this section we establish the existence of networks in Nash Equilibrium for almost the whole parameter space and we compare NE networks in the Adv-NCG with NE networks from the NCG~\cite{Fab03} and Kliemann's adversarial model~\cite{Kli11}.  
Moreover, we investigate structural properties which allow us to provide bounds on the Price of Stability and the Price of Anarchy.

We start with the existence result:
\begin{theorem}\label{thm_existence}
 The networks $DG_n$ and $DS_n$ are in pure Nash Equilibrium if $\alpha \leq\frac{1}{n(n-1)-1}$ and $\alpha \geq 1 - \frac{1}{2n-1}$, respectively.
\end{theorem}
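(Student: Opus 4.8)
The plan is to treat the two networks separately and, in each case, to first \emph{fix a concrete edge-ownership} realizing the topology and then verify that every agent plays a best response. Since the adversary deletes a single edge uniformly at random, a double edge is ``robust'': deleting one of its two copies leaves a parallel copy, so the corresponding distance is unaffected. Consequently, in both $DG_n$ and $DS_n$ every pairwise distance realized by a double edge equals $1$ regardless of the attack, and only \emph{single} edges (created by a deviating agent) contribute fractional terms of the form $\tfrac{1}{|E|}$ to the expected distance. To rule out deviations that buy or delete several edges at once I would invoke Proposition~\ref{pro1}, so that it suffices to analyze single buys, single deletions, and swaps.

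For $DG_n$ I would use the symmetric profile in which each agent owns exactly one of the two parallel edges incident to every other node; then $|E| = n(n-1)$ and each agent has expected distance exactly $1$ to all others, giving $cost_u(DG_n,\alpha) = (n-1)(1+\alpha)$. Buying an extra edge cannot lower an already-robust distance of $1$, and swapping an owned edge to an already-connected node merely converts one double edge into a single edge (raising that distance to $1+\tfrac{1}{n(n-1)}$), so neither move improves for any $\alpha$. The binding deviation is deleting a single owned edge $uv$: this turns $uv$ into a single edge, so $u$'s expected distance to $v$ rises from $1$ to $1+\tfrac{1}{n(n-1)-1}$ while its edge cost drops by $\alpha$. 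This deletion fails to improve precisely when $\alpha \le \tfrac{1}{n(n-1)-1}$, and Proposition~\ref{pro1} then rules out deleting several edges. Hence $DG_n$ is a NE exactly in the claimed range.

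For $DS_n$ the crucial modelling choice is the ownership: I would let the \emph{center own both} parallel edges to every leaf, so that each leaf has an \emph{empty} strategy. A leaf can therefore only \emph{buy}, and its single most profitable purchase is an edge to another leaf $\ell'$, which lowers its expected distance to $\ell'$ from $2$ to $1+\tfrac{1}{2n-1}$, a saving of $1-\tfrac{1}{2n-1}$; this is unprofitable exactly when $\alpha \ge 1-\tfrac{1}{2n-1}$, and Proposition~\ref{pro1} handles multiple purchases. For the center, note that every leaf is incident only to center-owned edges (all other agents build nothing), so to keep a leaf reachable after \emph{any} single deletion the center must maintain at least two parallel edges to it; otherwise that edge is a bridge whose removal isolates the leaf and makes the expected cost infinite. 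The double star already uses the minimum number $2(n-1)$ of such edges and realizes distance $1$ to every leaf, so any deviation of the center weakly increases its cost. Together these show that $DS_n$ is a NE whenever $\alpha \ge 1-\tfrac{1}{2n-1}$.

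The step I expect to be the main obstacle is precisely this ownership choice for $DS_n$. With the ``natural'' symmetric ownership, in which each leaf owns one of its two edges to the center, a leaf can \emph{swap} that edge to a fellow leaf: a short expected-distance computation shows this strictly lowers the leaf's cost (by roughly $\tfrac12$), so the double star would \emph{not} be an equilibrium. Getting the statement right thus hinges on assigning both copies of each center--leaf edge to the center, after which leaves have no edge to reposition; the remaining work is the routine but careful bookkeeping of expected distances under single-edge deletion.
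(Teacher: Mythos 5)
Your treatment of $DG_n$ coincides with the paper's (the paper additionally writes out the expected distance cost after deleting $k$ edges in closed form, but the content is the same). For $DS_n$ you genuinely depart from the paper, and the departure is warranted: the paper claims the result for \emph{arbitrary} edge-ownership and disposes of swaps with the single sentence that a swap leaves the edge cost unchanged but always increases the expected distance cost. Your counter-swap shows that sentence is false. If a leaf $u$ owns one of its two edges to the center and redirects it to another leaf $v$, then with $m=2(n-1)$ edges its expected distance cost becomes
$\frac{(m-2)(2n-4)+(3n-6)+(2n-3)}{m}=2n-\frac{7}{2}$,
exactly $\frac{1}{2}$ less than the $2n-3$ it had before, so the double star with leaf-owned edges is \emph{not} an equilibrium. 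Fixing the ownership so that the center owns both copies of every edge (whence leaves have empty strategies and can only buy) is therefore not a convenience but a necessary repair, and your analysis of the center's incentives is correct.

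However, the step ``Proposition~\ref{pro1} handles multiple purchases'' is a genuine gap, and it is the same gap the paper's own computation hides behind a reversed inequality. Buying $k$ single edges to $k$ distinct leaves raises the edge count to $2(n-1)+k$ and lowers the leaf's expected distance cost by $k\bigl(1-\frac{1}{2(n-1)+k}\bigr)$; the per-edge gain \emph{grows} with $k$, because every additional edge dilutes the probability that any particular single edge is the one attacked, so the benefit is superadditive and Proposition~\ref{pro1} fails in exactly this situation. Concretely, in $DS_4$ with $\alpha=0.86>1-\frac{1}{7}$ no single purchase is improving, yet buying edges to both other leaves saves $1.75>2\alpha$. (The paper's corresponding claim that $-k+\frac{k}{2(n-1)+k}\ge -1+\frac{1}{2n-1}$ for all $1\le k\le n-2$ is false already for $k=2$.) The binding deviation is $k=n-2$, so the threshold you can actually establish is $\alpha\ge 1-\frac{1}{3n-4}$ rather than $\alpha\ge 1-\frac{1}{2n-1}$; to close the argument you must analyze the $k$-edge purchase directly and adjust the constant accordingly.
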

\begin{proof}
 We start with proving that the double clique network $DG_n$ where every agents owns an edge to all other agents is in Nash Equilibrium if $\alpha \leq \frac{1}{n(n-1)-1}$. 
 
 Clearly, in $DG_n$ no agent can improve its expected distance cost by buying one or more edges since each agent already has the minimum possible expected distance cost. The same holds true for performing edge-swaps. Swapping an edge yields a single towards some other node in the network which then yields an expected distance towards this node which is strictly larger than $1$. Since the edge-cost stays the same and the expected distance cost increases by swapping edges, no agent can improve by swapping one or more edges. 
 It remains to analyze edge-deletions. In $DG_n$ every agent has expected distance cost of $n-1$. Deleting $1\leq k \leq n-1$ edges yields expected distance cost of 
 $$\frac{(n(n-1)-2k)(n-1) + kn}{n(n-1)-k} = n-1 + \frac{k}{n(n-1)-k} \geq n-1 + \frac{1}{n(n-1)-1}.$$
 Thus, if $\alpha \leq \frac{1}{n(n-1)-1}$ then deleting one or more edges is not an improving move. 
 
 Next, we show that the double star $DS_n$ with arbitrary edge-ownership is in Nash Equilibrium if $\alpha \geq 1 - \frac{1}{2n-1}$. 
 
 Clearly, no agent can delete edges since this destroys the 2-edge-connectedness of the network which then induces infinite cost. Moreover, no agent can swap edges since this does not change the edge cost but increases the expected distance cost. Thus, we are left to analyze edge purchases. Clearly the center of the double star cannot buy edges to decrease its cost. Hence, we analyze edge purchases by non-center nodes of $DS_n$. 
 Every such agent has expected distance cost of $1+2(n-2) = 2n-3$.
 
 Let $u$ be a non-center agent and let $S_u$ be $u$'s current strategy in network $G = DS_n$. Assume that agent $u$ can change its strategy from $S_u$ to $S_u'$ and thereby strictly decrease its cost. Let $G'$ be the network $G$ after $u$'s strategy-change from $S_u$ to $S_u'$. We claim that if $\alpha > \frac{1}{n-1}$ and if $G'$ contains at least three edges between the center vertex and $u$ or if there are at least two edges between $u$ and some other non-center vertex $v$, then agent $u$ has a strategy $S_u''$, which strictly outperforms strategy $S_u'$ and where the corresponding network $G''$ has exactly two edges between the center vertex and $u$ and at most one edge between $u$ and any other non-center vertex. Thus, we can assume that if agent $u$ has an improving strategy-change, then there exists an improving strategy-change towards a strategy which buys only additional single edges towards other non-center vertices. After proving the above claim, we will prove that no such improving strategy-change exists if $\alpha \geq 1-\frac{1}{2n-1} > \frac{1}{n-1}$, which then implies that $DS_n$ is in Nash Equilibrium for all $\alpha \geq 1-\frac{1}{2n-1}$.  
 
 Now we prove the claim: We first show that strategy $S_u'$ can be improved if $G'$ contains at least three edges between the center vertex and $u$. In this case this implies that $u$ owns at least one edge towards the center vertex and that agent $u$ could remove one edge from $G'$ to ensure that at least two edges between its and the center vertex remain. Let $G''$ be the network $G'$ after the edge-removal and let $S_u''$ be the strategy $S_u'$ without the removed edge. This removal would save $\alpha$ in edge-cost. If $u$ has no single edges towards any non-center vertex, then its expected distance cost in $G''$ would not increase compared to its expected distance cost in $G'$ by the edge-removal since all edges on all its shortest paths are backed up by another parallel edge. Since $\alpha > 0$, $S_u''$ strictly outperforms strategy $S_u'$. If $u$ has $1\leq k \leq n-2$ single edges towards $k$ different non-center vertices in $G'$, then the edge-removal of one edge between $u$ and the center vertex increases the probability that one of the $k$ edges is destroyed by the adversary. The probability increases by $$\frac{k}{m(m-1)} \leq\frac{n-2}{2n(2n-1)} < \frac{1}{n-1},$$ where $m\geq 2n$ is the number of edges in $G'$. Thus, agent $u$'s expected distance cost in $G''$ increases by at most $\frac{1}{4n}$ compared to its expected distance cost in $G'$. Since $\alpha > \frac{1}{n-1}$, it follows that $S_u''$ strictly outperforms $S_u'$. If $G''$ contains more than three edges between $u$ and the center vertex, then we can apply the above argument iteratively to obtain a strategy $S_u''$ which strictly outperforms $S_u'$ and a corresponding network $G''$ which has exactly two edges between $u$ and the center vertex. 

 Now we show that strategy $S_u'$ can be improved if $G'$ contains at least two edges between $u$ and some other non-center vertex $v$. Note that in this case all edges between $u$ and $v$ are bought by agent $u$. 
 
 It is possible that in network $G'$ there is no edge or only one edge between $u$ and the center vertex. If there is no edge between $u$ and the center vertex, then agent $u$ could swap two edges from $v$ to the center vertex and thereby strictly decrease its cost. This is true since this swap would decrease $u$'s expected distance to every vertex $w\neq v$ by at least $1$ and it only increases its expected distance to $v$ by $1$. If there is exactly one edge between $u$ and the center vertex, then agent $u$ could swap one edge from $v$ to the center vertex and thereby decrease its cost. This swap may create a single edge towards $v$ but if this edge is attacked by the adversary then this only increases $u$'s distance to $v$ by $1$ whereas in $G'$ an attack on the single edge between $u$ and the center vertex increases $u$'s distances to $n-2$ vertices by at least $1$. Hence, if there is no edge or only one edge between $u$ and the center-vertex, then in both cases there is a strategy $S_u''$ which strictly outperforms strategy $S_u'$ and where the corresponding network $G''$ has exactly two edges between $u$ and the center vertex. 
 Thus, we will assume in the following that there are exactly two edges between $u$ and the center vertex and at least two edges between $u$ and some non-center vertex $v$.
 
 Let $G''$ be the network obtained from network $G'$ by removing one of the edges between $u$ and $v$ and let $S_u''$ be $u$'s strategy obtained by removing the mentioned edge from $S_u'$. If there are at least two edges between $u$ and $v$ in $G''$, then an analogous argument as above yields that $S_u''$ strictly outperforms $S_u'$ if $\alpha > \frac{1}{n-1} > \frac{n-3}{2n(2n+1)}$. Note that in this case $G'$ has at least $2n+1$ many edges. If there is a single edge between $u$ and $v$ in $G''$, then the number of non-center vertices to which $u$ has a single edge increases by $1$ from $k$ to $k+1$ for some $0\leq k\leq n-3$. Thus, its expected distance cost compared to network $G'$ increases by $$\frac{k+1}{2n-1}-\frac{k}{2n} = \frac{2n+k}{2n(2n-1)} \leq \frac{1}{2n-1} + \frac{n-3}{2n(2n-1)}  < \frac{1}{n-1},$$ which implies that $S_u''$ strictly outperforms strategy $S_u'$ if $\alpha > \frac{1}{n-1}$.  
  
 Having settled the claim, we now analyze the case where a non-center agent $u$ buys $1\leq k \leq n-2$ single edges to $k$ other non-center nodes. In this case $u$'s expected distance cost is
 \begin{align*} &\frac{2(n-1)(k+1+2(n-2-k)) + k(k+2(n-2-k+1))}{2(n-1)+k}\\ =& 2n-3 - k + \frac{k}{2(n-1)+k}.\end{align*}
 Since $-k+ \frac{k}{2(n-1)+k} \geq -1 + \frac{1}{2(n-1)+1} = -1 + \frac{1}{2n-1}$ for $1\leq k \leq n-2$ it follows that the expected distance cost after buying $1\leq k \leq n-2$ single edges is at least $2n-3 - 1 + \frac{1}{2(n-1)+1}$. Thus if $\alpha \geq 1 - \frac{1}{2n-1}$, then buying one or more single edges is not an improving move for any non-center agent.
 
 Since $\alpha \geq 1 - \frac{1}{2n-1} > \frac{1}{4n}$, it follows that no non-center vertex can buy one or more edges in network $DS_n$ to strictly decrease its cost. 
\end{proof}
\noindent Next, we show that NE in the Adv-NCG are not comparable with NE from the NCG or Kliemann's model.
\begin{theorem}\label{thm_compare}
There is a NE in the Adv-NCG which is not an NE in the NCG and vice versa. The analogous statement also holds for Kliemann's model. 
\end{theorem}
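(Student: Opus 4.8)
The plan is to prove all four separations by exhibiting explicit, hand-checkable networks and reading off an improving move in the ``wrong'' model. Three structural facts drive everything: (i) in the Adv-NCG every finite-cost network, hence every NE, must be $2$-edge-connected, and agents are additionally rewarded for short \emph{expected} distances; (ii) the NCG~\cite{Fab03} ignores the adversary, so it neither penalizes trees nor rewards redundancy (parallel edges and, more generally, $2$-edge-connectivity are pure waste); and (iii) Kliemann's model~\cite{Kli11} (with the uniform edge-deletion adversary, to match ours) ignores distances entirely, rewarding only the expected number of reachable nodes. I would handle the two ``Adv-NCG $\Rightarrow$ not the others'' directions together, and then the two converses separately.

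For a network that is an Adv-NCG NE but neither an NCG nor a Kliemann NE, I would take the triangle $G_3$ with cyclic edge-ownership and $\alpha = 2$ (a single simple graph that is legal in all three strategy spaces; alternatively $DG_n$ for small $\alpha$, invoking Theorem~\ref{thm_existence}). A direct check shows $G_3$ is an Adv-NCG NE at $\alpha=2$: any deletion or swap creates a bridge and thus infinite expected cost, while every parallel-edge purchase saves strictly less than $\alpha$ in expected distance. In the NCG, by contrast, the owner of an edge can simply delete it, raising its distance by $1$ while saving $\alpha=2$, so $G_3$ is not an NCG NE; and in Kliemann's model the same deletion lowers the deleter's expected reachability from $2$ to $\tfrac12$ (the adversary splits the remaining two-edge path) while saving $\alpha=2$, a net gain, so $G_3$ is not a Kliemann NE either.

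For an NCG NE that is not an Adv-NCG NE I would use the center-owned star on $n$ nodes with $\alpha\ge 1$. Checking it is an NCG NE is routine: the center cannot improve, and a leaf (owning nothing) can only buy, where each new leaf--leaf edge saves exactly $1$ in distance at cost $\alpha\ge 1$. In the Adv-NCG the star is a tree, so every agent has \emph{infinite} expected distance cost. The subtle point, and the one I would be most careful about, is that ``no single edge removes the infinity'' does \emph{not} make the star an equilibrium: a single leaf can unilaterally buy edges to all other leaves, which $2$-edge-connects the whole network and thereby makes its own cost finite, a strict improvement from $+\infty$. Hence the star is not an Adv-NCG NE.

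Finally, for a Kliemann NE that is not an Adv-NCG NE I would use the cycle $C_n$ (each agent owning one edge) with small $\alpha$. Since $C_n$ minus any single edge stays connected, every agent reaches all $n-1$ others after any attack, so buying can never raise reachability and is never profitable; deleting one's own edge turns the cycle into a path and costs $\Theta(n)$ in expected reachability, which for small $\alpha$ outweighs the $\alpha$ saving; and the one-line bound that any deviation to $j$ owned edges yields utility $r-\alpha j$ with $r\le n-1$ rules out all remaining strategy-changes, so $C_n$ is a Kliemann NE (this also follows from the analysis in~\cite{Kli11}). In the Adv-NCG, by contrast, a single well-placed chord lowers some agent's expected distance cost by far more than a small $\alpha$, so $C_n$ is not an Adv-NCG NE. The main obstacles are exactly the two just flagged: verifying the cycle against \emph{all} deviations (including swaps and multi-edge changes) in the reachability model, and treating the infinite-cost star correctly rather than declaring a tree an equilibrium by default.
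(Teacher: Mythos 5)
Your proof is correct and follows the same basic template as the paper's: exhibit explicit witness networks and read off an improving move in the model where they fail to be equilibria. The concrete examples differ, though. For the Kliemann comparison you essentially reproduce the paper's argument (the paper uses $DG_n$ for small $\alpha$ where you use the triangle at $\alpha=2$, and both use the single-ownership cycle $C_n$ for the converse); your verification that the cycle survives \emph{all} deviations in the reachability model is, if anything, more explicit than the paper's one-line remark. For the NCG comparison the paper deliberately restricts attention to $2$-edge-connected witnesses --- it argues that since every Adv-NCG equilibrium is $2$-edge-connected, the only non-trivial comparison is among $2$-edge-connected networks --- and exhibits two five-node $2$-edge-connected graphs (Fig.~\ref{NE_for_M1}) for the two directions. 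Your triangle $G_3$ with cyclic ownership at $\alpha=2$ is a perfectly good, and simpler, witness for the first direction (the saving of $8/3-9/4=5/12$ per parallel edge versus $\alpha=2$ checks out, and Proposition~\ref{pro1} covers multi-buys), but your witness for the second direction, the center-owned star, is a tree: it separates the models only because trees have infinite expected cost in the Adv-NCG, which is exactly the degenerate case the paper sets aside. You do handle the one genuine subtlety there correctly --- a tree is not automatically a non-equilibrium, one must exhibit a unilateral deviation that escapes the infinite cost, and a leaf buying edges to all other leaves indeed $2$-edge-connects the network and makes that leaf's cost finite. So your argument proves the theorem as literally stated, and is easier to verify by hand; the paper's version of the NCG-to-Adv-NCG direction is strictly more informative, showing that the separation persists even among $2$-edge-connected networks where both models assign finite cost to every agent.
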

\begin{proof}
We compare the Nash equilibria of the Adv-NCG with the equilibria of the NCG~\cite{Fab03} and equilibria of Kliemann's model~\cite{Kli11}. Although the Adv-NCG can be understood as a mixture of both modes, we will show that the Nash equilibria of both models can be quite different from the Adv-NCG. Clearly, NE in the Adv-NCG are $2$-edge-connected, thus we should compare only $2$-edge-connected equilibria of all models. 

First we show that there is a $2$-edge-connected NE the NCG that is not a NE for the Adv-NCG, and vice versa.
\begin{figure}[h!]
\begin{minipage}[h]{0.5\linewidth}
	\center{\includegraphics[width=0.3\textwidth]{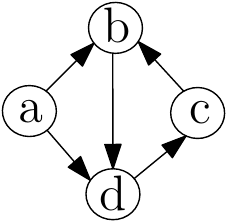}\\a) NE for Adv-NCG}
	\end{minipage}
	\hfill
	\begin{minipage}[h]{0.5\linewidth}
		\center{\includegraphics[width=0.3\linewidth]{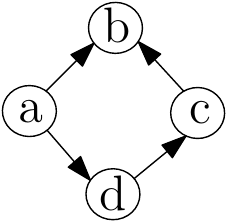} \\b) NE for NCG}
	\end{minipage}
 	\caption{Comparison of NE for NCG and Adv-NCG models}
 	\label{NE_for_M1}
\end{figure}  
Consider the network $G$ depicted in Fig~\ref{NE_for_M1}~a), 
is a NE for the Adv-NCG if $\frac{11}{15}\leq\alpha\leq\frac{7}{5}$. But this network is not a NE in the NCG, because agent $b$ could delete the edge $\{b,c\}$ and thereby decrease its cost from $\alpha+3$ to $4$ if $1 < \alpha \leq \frac{7}{5}$. 

For the reversed statement it is easy to see,that network $G'$, depicted in Fig.~\ref{NE_for_M1}~b) is a NE in the NCG, but is not a NE in the Adv-NCG if $1\leq\alpha\leq\frac{7}{5}$.

The main difference between Kliemann's model and the Adv-NCG is the 2-edge-connectedness. It means, that in Kliemann's model any agent has an individual cost $C_v(S)=|S_v|\cdot\alpha$, if the network is 2-edge-connected. It follows, that $DG_n$ is a NE in our model for very small $\alpha$, but $DG_n$ is not a NE in Kliemann's model.

Converse, for any small $\alpha$ a cycle where every agent owns exactly one edge is a NE in Kliemann's model, but it is not a NE for small $\alpha$ in the Adv-NCG.
\end{proof}

\subsection{Relation between the Diameter and the Social Cost}
We prove a property which relates the diameter of a network with its social cost. With this, we prove that one of the most useful tools for analyzing NE in the NCG~\cite{Fab03} can be carried over to the Adv-NCG.   

Before we start, we analyze the diameter increase induced by removing a single edge in a $2$-edge-connected network. 
\begin{lemma}\label{lem_diam_increase}
 Let $G  = (V,E)$ be any $2$-edge-connected network having diameter $D$ and let $G-e$ be the network $G$ where some edge $e\in E$ is removed. Then the diameter of $G-e$ is at most $2D$.
\end{lemma}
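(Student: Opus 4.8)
The plan is to show that when we remove a single edge $e = \{x,y\}$ from a $2$-edge-connected graph $G$, the distance between any two vertices $u$ and $v$ at most doubles. The key structural fact I would exploit is that in a $2$-edge-connected graph, every edge lies on a cycle; equivalently, for the removed edge $e = \{x,y\}$ there is an alternative path $P$ from $x$ to $y$ in $G - e$. Thus $G - e$ remains connected, and any shortest path in $G$ that used $e$ can be rerouted through $P$.

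First I would fix two arbitrary vertices $u,v$ and take a shortest $u$--$v$ path $Q$ in $G$, so that the length of $Q$ is $d_G(u,v) \leq D$. If $Q$ does not use the edge $e$, then $Q$ still exists in $G-e$ and $d_{G-e}(u,v) = d_G(u,v) \leq D \leq 2D$, and we are done. The interesting case is when $Q$ traverses $e$, say passing through $x$ then $y$. The natural idea is to replace the single edge-traversal $x \to y$ by the backup path $P$ from $x$ to $y$ in $G - e$. This yields a walk from $u$ to $v$ in $G-e$ of length $d_G(u,v) - 1 + |P|$, where $|P|$ is the length of the backup path. Hence $d_{G-e}(u,v) \leq d_G(u,v) - 1 + |P|$.

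The main obstacle is controlling the length $|P|$ of the backup path by $D$. A crude shortest $x$--$y$ path in $G$ would have length at most $D$, but that path could itself use $e$ (indeed, $e$ is the shortest $x$--$y$ path since $x,y$ are adjacent), so I cannot simply invoke the diameter bound on $P$ directly. The fix I would use is to bound the backup path in $G-e$: since $G$ is $2$-edge-connected, $x$ and $y$ remain connected in $G-e$, and I want to argue $d_{G-e}(x,y)$ is small. One clean way is to avoid routing through $x,y$ altogether and instead bound $d_{G-e}(u,v)$ using a well-chosen third vertex. Concretely, I would observe that for any vertex $w$, a shortest $u$--$w$ path and shortest $w$--$v$ path in $G$ each have length at most $D$; if I can find a vertex $w$ such that neither of these shortest paths needs $e$, then concatenating them gives $d_{G-e}(u,v) \leq 2D$.

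To make the third-vertex argument work, I would pick $w$ to be a neighbor (or near-endpoint) of the removed edge on the alternative cycle, exploiting that the two endpoints $x,y$ of $e$ are connected by two edge-disjoint paths in $G$, one of which is $e$ itself and the other avoiding $e$. Thus I can always route from $u$ to $x$ (or $y$) within $G-e$ using a shortest path of length at most $D$ plus, if necessary, the backup arc, and similarly from $y$ (or $x$) to $v$; the careful bookkeeping shows the total never exceeds $2D$. The delicate part is ensuring no double-counting and that the rerouting stays inside $G - e$, so I expect the bulk of the write-up to be a short case analysis (whether $Q$ uses $e$, and on which side of $e$ each of $u,v$ lies relative to the backup path), each case yielding a walk of length at most $2D$ in $G-e$.
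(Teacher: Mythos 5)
Your high-level plan---route from $u$ to $v$ through a well-chosen third vertex on the backup cycle of $e$ and bound each leg by $D$---is exactly the strategy of the paper's proof, but the concrete choice of that vertex is where the proof lives or dies, and the choice you propose is the wrong one. You suggest taking $w$ to be ``a neighbor (or near-endpoint) of the removed edge on the alternative cycle.'' Consider $G = C_{2D+1}$, which is $2$-edge-connected with diameter $D$; remove $e = \{x,y\}$ and take $u = x$, $v = y$. If $w$ is the neighbor of $y$ on the cycle other than $x$, then the unique shortest $x$--$w$ path in $G$ has length $2$ and uses $e$, while the shortest $x$--$w$ path avoiding $e$ has length $2D-1 > D$. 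So for your $w$ there simply is no $e$-avoiding path of length at most $D$ from one of the endpoints, and your fallback of appending ``the backup arc'' does not rescue the bound: the backup $x$--$y$ path can itself have length $2D$ (as in this example), so ``a shortest path of length at most $D$ plus the backup arc'' only yields $3D$. The ``careful bookkeeping'' you defer is thus not a routine verification---it cannot be completed with this choice of $w$.

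The missing idea is to take the intermediate vertex to be the point of a \emph{shortest} cycle $C$ through $e=\{p,q\}$ that is \emph{farthest} from $e$ along $C$ (the antipode of $e$ on $C$), which is what the paper does. For this vertex $z$ one shows $d_{G-e}(x,z)\le D$ for \emph{every} $x$: take a shortest $x$--$z$ path in $G$; if it traverses $e$, entering at $p$ and leaving at $q$, replace its suffix from $p$ through $q$ to $z$ by the arc of $C$ from $p$ to $z$ that avoids $e$. Since $C-e$ is a shortest $p$--$q$ path in $G-e$ and $z$ is its midpoint, the two arcs from $z$ to $p$ and to $q$ differ in length by at most one, and a short calculation shows the substitution does not lengthen the path. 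Concatenating the resulting $u$--$z$ and $z$--$v$ paths in $G-e$ then gives $d_{G-e}(u,v)\le 2D$. In short: your decomposition is right, but you need the farthest point on the cycle rather than a near one, together with the rerouting argument for the case that one of the two legs uses $e$.
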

\begin{proof}
 Let $e = \{u,v\}$ be the edge which is removed from a diameter $D$ network $G$ to obtain the network $G-e$.  
 Consider any shortest path $P$ in $G$ which uses edge $e$ somewhere along the path. Let $x$ and $y$ be the endpoints of path $P$ and we assume that $u$ and $v$ are the endpoints of $e$ which are closer to $x$ and $y$, respectively.
 
 Since $G$ is $2$-edge-connected, we can find a smallest cycle $C$ in $G$ which includes edge~$e$. Let $z$ be a node in the cycle $C$ which has maximum distance to edge $e$, that is, maximum distance to both $u$ and $v$ simultaneously. There exists a shortest path $P_{xz}$ in $G$ which connects $x$ and $z$, and there is a shortest path $P_{zy}$ in $G$ which connects $z$ and $y$. Both paths have length at most $D$. Observe, that it is impossible that both paths $P_{xz}$ and $P_{zy}$ contain edge $e$, since otherwise there must be a shorter path between $x$ and $z$ or $z$ and $y$.  
 
 If both of the paths do not contain edge $e$, then $P_{xz}\cup P_{zy}$ is a path between nodes $x$ and $y$ in the network $G - e$ and it has a length at most $2D$.
 
 Finally, consider a situation where exactly one of the paths contains edge $e$ in $G$ and let this be path $P_{xz}$. Thus, we have that $P_{xz} = P_{xv}\cup \{v, u\} \cup P_{uz}$ or $P_{xz} = P_{xu}\cup\{u,v\}\cup P_{vz}$ in graph $G$ and both of the paths $P_{uz}$ and $P_{vz}$ are parts of the cycle $C$. By choice of $z$, it follows that $P_{xv}\cup P_{vz}$ or $P_{xu}\cup P_{uz}$ is a path between $x$ and $z$ in graph $G - e$ which has length at most $D$. Since the path $P_{zy}$ does not contain edge $e$, it follows that it can be used in graph $G - e$. Since $P_{zy}$ has length at most $D$, we have that the distance between nodes $x$ and $y$ in $G-e$ is at most $2D$. 
\end{proof}
\noindent Next, we will focus on edges which are part of cuts of the network of size two. Remember that a bridge is an edge whose removal from a network increases the number of connected components of that network. Let $G = (V,E)$ be any $2$-edge-connected network. We say that an edge $e \in E$ is a \emph{$2$-cut-edge} if there exists a cut of $G$ of size $2$ which contains edge $e$. Equivalently, $e$ is a $2$-cut-edge of $G$ if its removal from $G$ creates at least one bridge in $G-e$. 
We now bound the number of $2$-cut-edges in any $2$-edge-connected network $G$. This is an important structural result, since this proves that the amount of edge-overbuilding due to the adversary is sharply limited. 
\begin{lemma}\label{lem_number_of_2_cut_edges}
 Any $2$-edge-connected network $G$ with $n$ nodes can have at most $2(n-1)$ edges which are $2$-cut-edges.
\end{lemma}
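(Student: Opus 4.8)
The plan is to prove the bound by induction on $n$, splitting $G$ along a 2-edge-cut and contracting each side. Writing $t(G)$ for the number of 2-cut-edges and $\partial(S)$ for the set of edges having exactly one endpoint in $S$, I would first record three facts: (i) an edge $g$ is a 2-cut-edge of $G$ exactly when $g$ lies in some edge cut $\partial(S)$ with $|\partial(S)| = 2$; (ii) in a 2-edge-connected graph such a minimal 2-edge-cut splits $V$ into exactly two connected parts, since deleting one of its edges leaves a connected graph in which the other becomes a bridge; and (iii) contracting a connected set $B$ to a single vertex $b^*$ yields a 2-edge-connected graph $G/B$ whose edge cuts are precisely the cuts $\partial(S)$ of $G$ that do not split $B$ (those with $B \subseteq S$ or $B \cap S = \emptyset$).

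For the inductive step, if $G$ is 3-edge-connected then $t(G) = 0 \le 2(n-1)$. Otherwise I would choose a 2-edge-cut $\{e,f\}$ whose smaller side $B$ is inclusion-minimal, let $A$ be the other side with $|A| = n_A$, $|B| = n_B$, and form $G_A = G/B$ and $G_B = G/A$ on $n_A + 1$ and $n_B + 1$ vertices. Every 2-cut-edge of $G$ lies inside $A$, lies inside $B$, or is one of $e, f$. The heart of the accounting is to show that each 2-cut-edge of $G$ lying inside $A$ is again a 2-cut-edge of $G_A$, while $e$ and $f$ form the additional 2-edge-cut $\partial_{G_A}(b^*)$ of $G_A$; the symmetric statement holds for $B$. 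These inclusions are disjoint, so $t(G_A) \ge (\text{inside-}A\text{ count}) + 2$ and likewise for $B$, whence $t(G) \le (t(G_A) - 2) + (t(G_B) - 2) + 2$. Provided both sides are genuinely smaller, the induction hypotheses $t(G_A) \le 2n_A$ and $t(G_B) \le 2n_B$ then give $t(G) \le 2(n_A + n_B) - 2 = 2(n-1)$.

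I expect the main obstacle to be the preservation claim: a 2-cut-edge $g$ with both endpoints in $A$ need not stay a 2-cut-edge after contracting $B$, because the minimal cut $\partial(S) = \{g,h\}$ witnessing $g$ may cross the cut $\{e,f\} = \partial(B)$, i.e.\ $S$ may split $B$ (this genuinely happens, for instance inside an even cycle). I would resolve this by uncrossing the two 2-edge-cuts using posimodularity of the cut function, which follows from its symmetry and submodularity: $|\partial(S)| + |\partial(B)| \ge |\partial(S \setminus B)| + |\partial(B \setminus S)|$. Here $|\partial(S)| = |\partial(B)| = 2$, and since $G$ is 2-edge-connected every nonempty proper side has boundary at least $2$; as $S \setminus B = S \cap A$ and $B \setminus S$ are both nonempty and proper when $S$ splits $B$, the inequality forces $|\partial(S \cap A)| = 2$. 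Because $g$ has both endpoints in $A$ and is separated by $S$, it is still separated by $S \cap A$, so $\partial(S \cap A)$ is a 2-edge-cut through $g$ that does not split $B$ and therefore descends to a genuine 2-edge-cut of $G/B$ by fact (iii).

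Finally I would close the two loose ends. The contraction shrinks the vertex count only when both sides have at least two vertices, so it remains to treat a minimal cut whose smaller side is a single vertex $w$; such a $w$ has degree exactly $2$. Rather than contract, I would suppress $w$, replacing its two incident edges by one edge between its neighbours (or deleting $w$ when the neighbours coincide), obtaining a 2-edge-connected graph $G'$ on $n-1$ vertices. Since suppression induces a bijection between the cycles of $G$ and those of $G'$ that preserves the incidences of every edge other than the two at $w$, each 2-cut-edge of $G$ distinct from those two remains a 2-cut-edge of $G'$, so $t(G') \ge t(G) - 2$ and hence $t(G) \le t(G') + 2 \le 2(n-1)$. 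With the base cases $n \le 2$ immediate, the induction is complete.
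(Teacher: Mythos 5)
Your proof is correct, but it takes a genuinely different route from the paper. The paper argues extremally: it associates to each $2$-cut-edge a ``cut-cycle'' (a shortest cycle through the edge and all bridges its removal creates), shows by a local graph transformation that the count of $2$-cut-edges is maximized when every cut-cycle contains exactly two of them, and then bounds the number of cut-cycles by $n-1$ by iteratively deleting the two $2$-cut-edges of a cut-cycle and counting connected components. You instead induct on $n$: split along a $2$-edge-cut $\partial(B)=\{e,f\}$, contract each side, show via posimodularity of the cut function that a $2$-cut-edge inside $A$ whose witnessing cut crosses $B$ still admits an uncrossed witness $\partial(S\cap A)$ of size $2$ and hence survives contraction, and handle singleton sides by suppressing the degree-$2$ vertex. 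I checked the key steps: the accounting $t(G)\le t(G_A)+t(G_B)-2$ is right (the two contracted graphs each gain $\{e,f\}=\partial(b^*)$ as a fresh $2$-cut, disjoint from the inherited ones), the uncrossing inequality $4\ge |\partial(S\cap A)|+|\partial(B\setminus S)|$ with both terms at least $2$ does force $|\partial(S\cap A)|=2$ with $g\in\partial(S\cap A)$, and the suppression step preserves $2$-cut-edges other than the two at $w$ (this is perhaps easier to verify directly on cuts than via your cycle bijection, but both work). What each approach buys: yours is a self-contained, fully rigorous induction built on standard uncrossing machinery and sidesteps the paper's exchange argument, whose ``modify the network to maximize the number of cut-cycles'' step leaves termination and the reduction to the maximized instance somewhat informal; the paper's argument is more elementary and constructive in flavor and naturally points to the tight example (the path of double edges), which your induction also certifies but does not exhibit.
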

\begin{proof}
Let $e$ be any $2$-cut-edge in network $G$. By definition, the removal of $e$ creates one or more bridges in $G-e$. Let $b_1,\dots,b_l$ denote those bridges. Note, that $b_1,\dots,b_l$ also must be $2$-cut-edges in $G$. Moreover, it follows that there must be a shortest cycle $C$ in $G$ which contains all the edges $e,b_1,\dots,b_l$. If there are more than one such cycles, then fix one of them. We call the fixed cycle $C$ a cut-cycle. 

Notice that any 2-cut-edge corresponds to exactly one cut-cycle in the network and that every cut-cycle contains at least two 2-cut-edges. We show in the following that if any cut-cycle in the network contains at least three 2-cut-edges, then we can modify the network to obtain strictly more 2-cut-edges and strictly more cut-cycles. This implies that the number of 2-cut-edges is maximized if the number of cut-cycles is maximized and every cut-cycle contains exactly two 2-cut-edges.  

Now we describe the procedure which converts any network with at least one cut-cycle containing at least three 2-cut-edges into a modified network with a strictly increased number of cut-cycles and 2-cut-edges (see Fig. \ref{Transf_to_max_2ce}). 
\begin{figure}[h!]
	\begin{minipage}[h]{0.5\linewidth}
		\center{\includegraphics[width=0.9\linewidth]{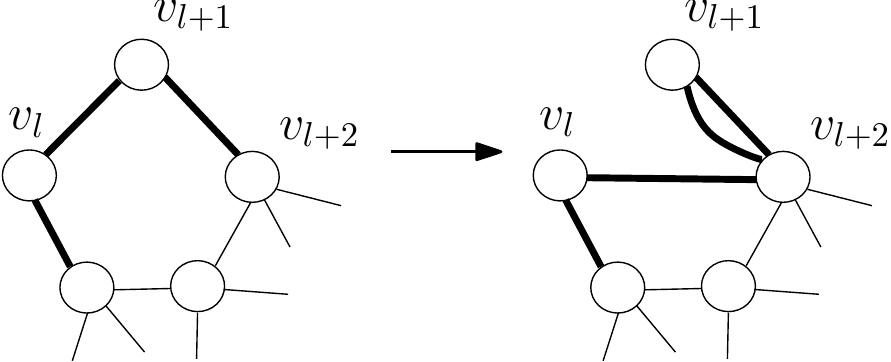}\\a) Conversion of adjacent 2-cut-edges}
	\end{minipage}
	\hfill
	\begin{minipage}[h]{0.5\linewidth}
		\center{\includegraphics[width=1\linewidth]{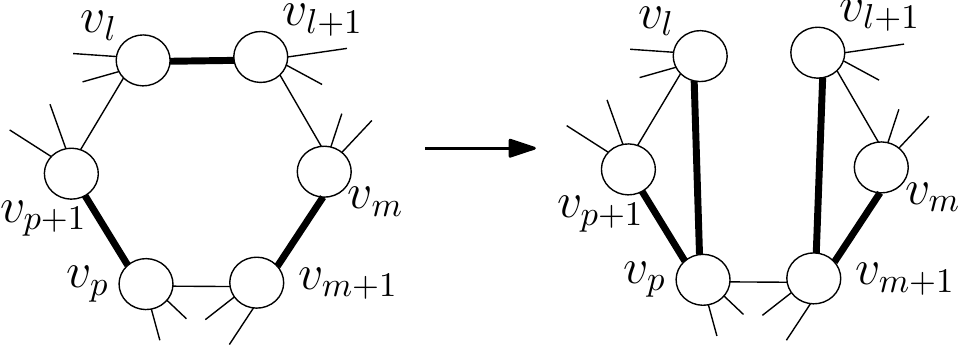} \\b) Conversion of non-adjacent 2-cut-edges}
	\end{minipage}
	
 	\caption{Increasing the number of 2-cut-edges by splitting up a cut-cycle.}
 	\label{Transf_to_max_2ce}
\end{figure}

Let $G$ be any network with at least one cut-cycle $C = v_1, \ldots, v_k, v_1$ containing at least three 2-cut-edges. 
If there are two adjacent 2-cut-edges $\{v_l, v_{l+1}\}, \{v_{l+1}, v_{l+2}\}$ in cycle $C$, then delete the 2-cut-edge $\{v_l, v_{l+1}\}$ and insert two new edges $\{v_l, v_{l+2}\}$ and $\{v_{l+1}, v_{l+2}\}$. First of all, note that these new edges have not been present in network $G$ before the insertion since otherwise $\{\{v_l,v_{l+1}\},\{v_{l+1},v_{l+2}\}\}$ cannot be a cut of $G$. We claim that both new edges are 2-cut-edges and that the cycle $C$ is divided into two new cut-cycles $v_1, \ldots, v_l, v_{l+2}, \ldots, v_k, v_1$ and $v_{l+1}, v_{l+2}, v_{l+1}$. Indeed, there are at least two bridges $\{v_{l+1}, v_{l+2}\}$ and $\{v_{k}, v_{k+1}\}$ in the cut-cycle $C$ after deleting $\{v_l, v_{l+1}\}$, and both of them end up in different new cut-cycles. Hence, deleting any of the newly inserted edges $\{v_l, v_{l+2}\}$ or $\{v_{l+1}, v_{l+2}\}$ implies that $\{v_{k}, v_{k+1}\}$ or $\{v_{l+1}, v_{l+2}\}$ becomes a bridge. Thus, both new edges are 2-cut-edges and both of new cycles are cut-cycles.

If there are three pairwise non-adjacent 2-cut-edges $\{v_l, v_{l+1}\}$, $\{v_m, v_{m+1}\}$, $\{v_p, v_{p+1}\}$ in cycle $C$, then delete one 2-cut-edge $\{v_l, v_{l+1}\}$ and insert two new edges $\{v_l, v_{p+2}\}$ and $\{v_{l+1}, v_{m+1}\}$. Analogous to above, both new edges cannot be already present in $G$ and both are 2-cut-edges because deleting any of them renders edge $\{v_m, v_{m+1}\}$ or $\{v_p, v_{p+1}\}$ a bridge. Moreover, cut-cycle $C$ is divided into two new cut-cycles.

Finally, we claim that the maximum number of cut-cycles in any $n$-vertex network $G$ is at most $n-1$. Since we know that every such cut-cycle contains exactly two 2-cut-edges this then implies that there can be at most $2(n-1)$ 2-cut-edges in any network $G$.

Now we prove the above claim. Note that applying our transformation does not disconnect the network. Thus, we know that network $G$ after all transformations is connected. Now we iteratively choose any cut-cycle $C$ in $G$ and we delete the two 2-cut-edges contained in $C$. This deletion increases the number of connected components of the current network by exactly $1$. We repeat this process until we have destroyed all cut-cycles in $G$. Note that deleting edges from $G$ may create new cut-cycles, but we never destroy more than one of them at a time. Thus, since each iteration increases the number of connected components of the network by $1$, it follows that there can be at most $n-1$ iterations since network $G$ with $n$ vertices cannot have more than $n$ connected components.
\end{proof}
\begin{remark}
Lemma~\ref{lem_number_of_2_cut_edges} is tight, since a path of length $n-1$, where all neighboring nodes are connected via double edges, has exactly $2(n-1)$ $2$-cut-edges. 
\end{remark}
\noindent Now we relate the diameter with the social cost.
\begin{theorem}\label{thm_diam_social_cost}
 Let $(G,\alpha)$ be any NE network on $n$ nodes having diameter $D$ and let $OPT(n,\alpha)$ be the optimum network on $n$ nodes for the same edge-cost $\alpha$. Then we have that $$\frac{cost(G,\alpha)}{cost(OPT(n,\alpha))} \in \mathcal{O(D)}.$$
\end{theorem}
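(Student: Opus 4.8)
The plan is to bound the social cost $cost(G,\alpha)=\alpha|E|+dist(G)$ against two separate lower bounds for the optimum. Since $\OPT(n,\alpha)$ must be $2$-edge-connected it has at least $n$ edges, so $cost(\OPT(n,\alpha))\ge\alpha n$; and since every agent is at distance at least $1$ from each of the other $n-1$ nodes we have $dist_{G'}(u)\ge\delta_{G'}(u)\ge n-1$ for every network $G'$, hence $cost(\OPT(n,\alpha))\ge n(n-1)$. I would use the first inequality to absorb edge-cost terms and the second to absorb distance-cost terms.

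First I would bound the distance cost of the equilibrium. As $(G,\alpha)$ is a NE it is $2$-edge-connected, so for every edge $e$ the network $G-e$ is connected and, by Lemma~\ref{lem_diam_increase}, has diameter at most $2D$. Thus $d_{G-e}(u,v)\le 2D$ for all $u,v$, which gives $\delta_{G-e}(u)\le 2D(n-1)$ and therefore $dist_G(u)=\tfrac1{|E|}\sum_e\delta_{G-e}(u)\le 2D(n-1)$ for every agent $u$. Summing over all agents yields $dist(G)\le 2Dn(n-1)\le 2D\cdot cost(\OPT(n,\alpha))$, so the distance part is already $\mathcal{O}(D)\cdot cost(\OPT(n,\alpha))$.

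The heart of the proof is bounding the edge cost $\alpha|E|$, and here I would split the edges of $G$ into the $2$-cut-edges and the rest. By Lemma~\ref{lem_number_of_2_cut_edges} there are at most $2(n-1)$ $2$-cut-edges, contributing at most $2\alpha(n-1)\le 2\cdot cost(\OPT(n,\alpha))$ to the edge cost; this is precisely the bounded ``edge-overbuilding due to the adversary''. For every remaining edge $e=\{u,v\}$ owned by $u$, removing $e$ keeps the network $2$-edge-connected, so $u$ could delete it without incurring infinite cost, and since $(G,\alpha)$ is a NE this deletion must raise $u$'s expected distance cost by at least $\alpha$; in particular it must strictly shorten $u$'s expected distance to some nonempty set $B_e$ of nodes. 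Fixing a shortest-path tree of $G$ rooted at $u$, the sets $B_e$ of different edges of $u$ lie in disjoint subtrees, so $\sum_e|B_e|\le n$. On the other hand, after the deletion every affected distance is still $\mathcal{O}(D)$ (again by Lemma~\ref{lem_diam_increase}), so the per-node increase is $\mathcal{O}(D)$ and the equilibrium inequality $\alpha\le \mathcal{O}(D)\,|B_e|$ forces $|B_e|\ge\Omega(\alpha/D)$. Hence each agent owns at most $\mathcal{O}(Dn/\alpha)$ non-$2$-cut-edges, so summing over all $n$ agents gives at most $\mathcal{O}(Dn^2/\alpha)$ such edges and edge cost $\mathcal{O}(Dn^2)\le\mathcal{O}(D)\cdot cost(\OPT(n,\alpha))$ via $cost(\OPT(n,\alpha))\ge n(n-1)$. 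Adding the three contributions proves the theorem, and notably no case distinction on $\alpha$ is needed: for small $\alpha$ the edge-count bound is weak but the factor $\alpha$ compensates, and for large $\alpha$ agents cannot afford many distance-reducing edges.

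The step I expect to be the main obstacle is making the charging argument for the non-$2$-cut-edges precise in the \emph{expected}-distance setting. In the plain NCG a $u$-edge off the shortest-path tree saves nothing and would be deleted, which is what makes the sets $B_e$ disjoint; but under the adversary an edge can reduce $u$'s expected distance purely as a \emph{backup} (it is useless in $G$ yet shortens distances in some $G-a$ after the adversary removes $a$), even though it is not a $2$-cut-edge. The delicate part is therefore to define ``nodes whose expected distance to $u$ strictly decreases'' and re-establish the disjointness and the $\Omega(\alpha/D)$ lower bound on $|B_e|$ when the multiset of shortest paths, the averaging set of edges, and the denominator $|E|$ all change upon deletion. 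I would control this using the same two tools: Lemma~\ref{lem_diam_increase} keeps every post-failure distance $\mathcal{O}(D)$ so that each affected node still contributes only an $\mathcal{O}(D)$ increase regardless of the averaging, and Lemma~\ref{lem_number_of_2_cut_edges} absorbs the genuinely robustness-motivated edges into the $\mathcal{O}(n)$ bucket, leaving the remaining edges to be charged against shortest-path structure as above.
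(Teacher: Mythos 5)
Your proposal follows essentially the same route as the paper: the lower bound $cost(OPT(n,\alpha))\in\Omega(\alpha n+n^2)$, the bound $dist(G)\in\mathcal{O}(n^2D)$ via Lemma~\ref{lem_diam_increase}, the split of $E$ into at most $2(n-1)$ $2$-cut-edges (Lemma~\ref{lem_number_of_2_cut_edges}) plus the rest, and a disjoint-charging argument showing each agent owns $\mathcal{O}(nD/\alpha)$ non-$2$-cut-edges. The one step you flag as delicate is indeed the only real work, and as written your version of it does not quite go through: if $B_e$ is the set of nodes whose \emph{expected} distance to $u$ strictly decreases because of $e$, these sets need not lie in disjoint subtrees of a shortest-path tree --- a pure backup edge can strictly decrease the expected distance to nodes scattered across many subtrees, which is exactly the phenomenon you describe in your last paragraph. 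The paper's fix is to define $V_e$ purely combinatorially as the set of nodes all of whose shortest paths from $v$ in $G$ traverse $e$; disjointness is then automatic, and a direct term-by-term comparison of $dist_{G-e}(v)$ with $dist_G(v)$ over the adversary's choices $f$ shows that deleting $e$ increases $v$'s expected distance cost by at most $(|V_e|+1)\cdot 4D$, where the extra $+1$ absorbs precisely the global effect of the changed denominator $|E|$ on all nodes outside $V_e$. With that inequality the NE condition yields $|V_e|\geq\frac{\alpha}{4D}-1$, and the remainder of your argument coincides with the paper's.
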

\begin{proof}
 Since $OPT(n,\alpha)$ must be $2$-edge-connected, it must have at least $n$ edges. Moreover, the minimum expected distance between each pair of vertices in $OPT(n,\alpha)$ is at least $1$. Thus, we have that $cost(OPT(n,\alpha)) \in \Omega(\alpha \cdot n + n^2)$.
 
 Now we analyze the social cost of the NE network $(G,\alpha)$, where $G = (V,E)$. We have $cost(G,\alpha) = edge(G,\alpha) + dist(G)$ and we will analyze both terms separately. We start with an upper bound on $dist(G)$.
 
 Since $(G,\alpha)$ has diameter $D$ and since $(G,\alpha)$ is $2$-edge-connected, Lemma~\ref{lem_diam_increase} implies that the expected distance between each pair of vertices in $(G,\alpha)$ is at most $2D$. Thus, we have that $dist(G) \in \mathcal{O}(n^2\cdot D).$

 Now we analyze $edge(G,\alpha)$. By Lemma~\ref{lem_number_of_2_cut_edges} we have at most $2n$ many $2$-cut-edges in $G$. Buying all those edges yields cost of at most $2n\cdot \alpha$. 
 
 We proceed with bounding the number of non-$2$-cut-edges in $G$. We consider any agent $v$ and analyze how many non-$2$-cut-edges agent $v$ can have bought. We claim that this number is in $\mathcal{O}\left(\frac{nD}{\alpha}\right)$, which yields total edge-cost of $\mathcal{O}(nD)$ for agent $v$. Summing up over all $n$ agents, this yields total edge-cost of $\mathcal{O}(n^2D)$ for all non-$2$-cut-edges of $G$. This then implies an upper bound of $\mathcal{O}(\alpha\cdot n + n^2D)$ on the social cost of $G$ which finishes the proof.
 
 Now we prove our claim. Fix any non-$2$-cut-edge $e = \{v,w\}$ of $G$ which is owned by agent $v$. Let $V_e \subset V$ be the set of nodes of $G$ to which all shortest paths from $v$ traverse the edge $e$.  
 
 We first show that removing the edge $e$ increases agent $v$'s expected distance to any node in $V_e$ to at most $4D$. By Lemma~\ref{lem_diam_increase}, removing edge $e$ increases the diameter of $G$ from $D$ to at most $2D$. Since $e$ is a non-$2$-cut-edge, we have that $G-e$ is still $2$-edge-connected. Thus, again by Lemma~\ref{lem_diam_increase}, it follows that agent $v$'s expected distance to any other node in $G-e$ is at most $4D$.
 
 However, removing edge $e$ not only increases $v$'s expected distance towards all nodes in $V_e$, instead, since $G-e$ has a less many edges than $G$, agent $v$'s expected distance to \emph{all} other nodes in $V\setminus(V_e\cup \{v\})$ increases as well. We now proceed to bound this increase in expected distance cost.   
 
 We compare agent $v$'s expected distance cost in network $G$ and in network $G-e$. Let $m$ denote the number of edges in $G$. Thus, $G-e$ has $m-1$ many edges. For network $G$ agent $v$'s expected distance cost is $$dist_G(v) = \frac{1}{m}\sum_{f\in E}\delta_{G-f}(v) = \frac{1}{m} \sum_{f\in E\setminus\{e\}}\delta_{G-f}(v) + \frac{\delta_{G-e}(v)}{m}.$$ In network $G-e$, we have 
 $dist_{G-e}(v) = \frac{1}{m-1}\sum_{f\in E\setminus\{e\}}\delta_{G-e-f}(v)$.
 Now we upper bound the increase in expected distance cost for agent $v$ due to removal of edge $e$ from $G$. $dist_{G-e}(v) - dist_G(v)$ is
 \begin{align*}
  &  \frac{1}{m-1}\sum_{f\in E\setminus\{e\}}\delta_{G-e-f}(v) - \left(\frac{1}{m} \sum_{f\in E\setminus\{e\}}\delta_{G-f}(v) + \frac{\delta_{G-e}(v)}{m}\right) \\
  =& \sum_{f\in E\setminus\{e\}}\left(\frac{\delta_{G-e-f}(v)}{m-1} - \frac{\delta_{G-f}(v)}{m}\right) -\frac{\delta_{G-e}(v)}{m}.
 \end{align*}
 We have that $\delta_{G-e-f}(v) \leq \delta_{G-f}(v) + |V_e|\cdot 4D$, since in $G-e-f$ only the distances to nodes in $V_e$ increase, compared to the network $G-f$ and since $e$ is a non-$2$-cut-edge in $G$. Moreover, by Lemma~\ref{lem_diam_increase}, the distances to nodes in $V_e$ in $G-e-f$ increase to at most $4D$ for each node in $V_e$.
 Thus, we have that $dist_{G-e}(v) - dist_G(v)$ is 
 \begin{align*}
  &\sum_{f\in E\setminus\{e\}}\left(\frac{\delta_{G-e-f}(v)}{m-1} - \frac{\delta_{G-f}(v)}{m}\right) -\frac{\delta_{G-e}(v)}{m}\\
  &\leq  \sum_{f\in E\setminus\{e\}}\left(\frac{\delta_{G-f}(v) + |V_e|4D}{m-1} - \frac{\delta_{G-f}(v)}{m}\right)\\
  &= |V_e|4D + \sum_{f\in E\setminus\{e\}}\left(\frac{\delta_{G-f}(v)}{m(m-1)}\right) \leq |V_e|4D + \sum_{f\in E\setminus\{e\}}\left(\frac{2D\cdot n}{n(m-1)}\right)\\
  &\leq |V_e|4D + 4D  = (|V_e|+1)4D.
 \end{align*}
 Since $G$ is in Nash Equilibrium, we know that removing edge $e$ is not an improving move for agent $v$. Thus, we have that $$\alpha \leq  (|V_e|+1)4D \iff  |V_e| \geq \frac{\alpha}{4D}-1.$$
 Thus, for all non-$2$-cut-edges $e$ which are bought by agent $v$, we have that $|V_e|\in \Omega(\frac{\alpha}{D})$. Since all these sets $V_e$ are disjoint, it follows that $v$ can have bought at most $\frac{n}{\Omega(\frac{\alpha}{D})} \in \mathcal{O}(\frac{nD}{\alpha})$ many non-$2$-cut-edges. 
\end{proof}

\subsection{Price of Stability and Price of Anarchy}
\begin{theorem}\label{thm_PoS}
If $\alpha\leq\frac{1}{n(n-1)-1}$, then the PoS is 1. If $\frac{1}{n(n-1)-1} < \alpha < \frac{2}{n(n-1)-1}$, then PoS is strictly larger than $1$, if  $\alpha>1-\frac{1}{2n-1}$, then the PoS is at most 2. 
\end{theorem}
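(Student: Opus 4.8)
The plan is to treat the three regimes separately, using throughout that the social cost does not depend on edge-ownership and leaning on the optimum characterization of Observation~\ref{obs_opt} and the existence results of Theorem~\ref{thm_existence}. Note that Theorem~\ref{thm_diam_social_cost} will \emph{not} help here: its $\mathcal{O}(D)$ bound is too coarse to separate $\minNE$ from $\OPT$ or to extract the explicit constant $2$, so each part needs a direct argument.

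For the first regime $\alpha \le \frac{1}{n(n-1)-1}$, Observation~\ref{obs_opt} gives $\OPT(n,\alpha) = DG_n$, and Theorem~\ref{thm_existence} shows $DG_n$ is in NE for exactly this range. Hence the optimal network is itself a Nash equilibrium, so $\minNE(n,\alpha) = \OPT(n,\alpha)$ and the PoS equals $1$.

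For the second regime $\frac{1}{n(n-1)-1} < \alpha < \frac{2}{n(n-1)-1}$, I would first strengthen Observation~\ref{obs_opt} to \emph{uniqueness}: since $\alpha < \frac{2}{n(n-1)-1}$, the argument preceding the observation shows that any pair not joined by a double edge can be upgraded to a double edge at a strict decrease in social cost, so the only optimal graph is $DG_n$. I would then show $DG_n$ is \emph{not} a NE once $\alpha > \frac{1}{n(n-1)-1}$: any agent owning one of the two parallel edges of some pair can delete it, saving $\alpha$ while raising its expected distance to that neighbor only by $\frac{1}{n(n-1)-1}$ (the distance grows from $1$ to $2$ only when the adversary hits the now-single edge), exactly as in the deletion computation of Theorem~\ref{thm_existence}. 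This deletion is improving, so no ownership assignment turns $DG_n$ into a NE. As $DG_n$ is the unique optimal graph and is unstable, every NE has strictly larger cost, giving $\minNE > \OPT$ and PoS $> 1$.

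For the third regime $\alpha > 1 - \frac{1}{2n-1}$, Theorem~\ref{thm_existence} gives that $DS_n$ is in NE, so $\minNE \le cost(DS_n)$ and it suffices to bound $cost(DS_n)/cost(\OPT)$. I would compute $cost(DS_n)$ directly: it has $2(n-1)$ edges, the center has expected distance cost $n-1$, and each of the $n-1$ leaves has expected distance cost $1 + 2(n-2) = 2n-3$ (all these distances are robust since every star-edge is doubled), so
$$cost(DS_n) = 2(n-1)\alpha + (n-1) + (n-1)(2n-3) = 2(n-1)(\alpha + n - 1).$$
For the denominator I would use that every optimal network is $2$-edge-connected, hence has at least $n$ edges, and has expected distance at least $1$ between each of the $n(n-1)$ ordered pairs, whence
$$cost(\OPT(n,\alpha)) \ge \alpha n + n(n-1) = n(\alpha + n - 1).$$
Dividing telescopes the common factor $(\alpha+n-1)$ and yields $\frac{cost(DS_n)}{cost(\OPT)} \le \frac{2(n-1)}{n} = 2 - \frac{2}{n} < 2$, establishing PoS $\le 2$.

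I expect the second regime to be the main obstacle: the window has width $\Theta(1/n^2)$, so the deletion estimate must be carried out with the exact edge count $n(n-1)-1$ rather than asymptotically, and one must argue that uniqueness of $DG_n$ plus its instability forces \emph{every} NE strictly above the optimum. One should also confirm that at least one NE exists in this narrow window so that the ratio is well defined (otherwise the statement is read as conditional on existence). By contrast, the third regime is the only one needing a quantitative computation, but the cancellation of $(\alpha + n - 1)$ makes the clean constant $2$ essentially automatic.
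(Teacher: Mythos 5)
Your proposal is correct and follows essentially the same route as the paper's proof: regime one via Observation~\ref{obs_opt} plus Theorem~\ref{thm_existence}, regime two via uniqueness of $DG_n$ as optimum and the edge-deletion instability argument, and regime three via $cost(DS_n)$ over the trivial lower bound on $cost(\OPT)$. Your third-regime computation is in fact slightly cleaner than the paper's (you use the exact distance lower bound $n(n-1)$ so the factor $(\alpha+n-1)$ cancels, whereas the paper writes $n\alpha+n^2$ in the denominator), and your remark about needing to confirm NE existence in the narrow second window is a legitimate caveat the paper glosses over.
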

\begin{proof}
By Theorem~\ref{thm_existence} and Observation~\ref{obs_opt} $DG_n$ network is optimal and is a Nash equilibrium when $\alpha\leq\frac{1}{n(n-1)-1}$. Thus, price of stability is 1 for this value of $\alpha$.

The second statement holds, since $DG_n$ is the unique optimum for $\alpha \leq \frac{2}{n(n-1)-1}$ but any agent could delete an edge and thereby increase its expected distance cost by $\frac{1}{n(n-1)-1}$. Thus, if $\frac{1}{n(n-1)-1} < \alpha \leq \frac{2}{n(n-1)-1}$, then this edge-deletion is an improving move which shows that $DG_n$ is not a NE. 

The third statement follows from Theorem~\ref{thm_existence} and the simple lower bound on the expected social cost of the optimum from the proof of Theorem~\ref{thm_diam_social_cost}. 
Thus, for $\alpha>1-\frac{1}{2n-1}$ the PoS is at most $\frac{cost(DS_n)}{n\alpha+n^2}=\frac{2(n-1)\alpha+2(n-1)^2}{n\alpha+n^2}\leq 2$.
\end{proof}
\noindent We now show how to adapt two techniques from the NCG for bounding the diameter of equilibrium networks to our adversarial version. This can be understood as a proof of concept showing that the Adv-NCG can be analyzed as rigorously as the NCG. However, carrying over the currently strongest general diameter bound of $2^{\mathcal{O}(\sqrt{\log n})}$ due to Demaine et al.~\cite{De07}, which is based on interleaved region-growing arguments seems challenging due to the fact that we can only work with expected distances.   

We start with a simple diameter upper bound based on \cite{Fab03}. 
\begin{theorem}
 The diameter of any NE network $(G,\alpha)$ is in $\mathcal{O}(\sqrt{\alpha})$.
\end{theorem}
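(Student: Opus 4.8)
The plan is to adapt the classical $\Omega(D^2)$ shortcut argument of Fabrikant et al.~\cite{Fab03}: if the diameter $D$ were large, the agent sitting at one end of a diametral shortest path would gain more than $\alpha$ by buying a single edge to the other end, contradicting the equilibrium property. The only genuine complication compared with the non-adversarial NCG is that the quantity being optimized is the \emph{expected} distance $dist_G$, while Lemma~\ref{lem_diam_increase} controls post-deletion distances only up to a factor two and only \emph{globally}; in particular a node close to the far endpoint may still have large expected distance, so the deterministic savings cannot simply be read off.

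First I would fix two nodes $u,v$ with $d_G(u,v)=D$ (assuming $D\ge 2$, the case $D\le 1$ being trivial) and a shortest path $P=u_0,\dots,u_D$ with $u_0=u$ and $u_D=v$. Since $u$ and $v$ are non-adjacent, the move I analyse is $u$ buying the single new edge $uv$, giving $G'=G+uv$ with $G'-uv=G$ and $|E|+1$ edges. I would then rewrite the change in $u$'s expected distance cost in a ``per-deleted-edge'' form: setting $s_e=\delta_{G-e}(u)-\delta_{G'-e}(u)\ge 0$ for $e\in E$, a short computation gives
$$dist_G(u)-dist_{G'}(u)=\frac{dist_G(u)-\delta_G(u)+\sum_{e\in E}s_e}{|E|+1}\ge \frac{\sum_{e\in E}s_e}{|E|+1},$$
where the inequality uses $dist_G(u)\ge\delta_G(u)$ (deleting edges never shortens distances).

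The crucial step is to lower bound $s_e$ for the edges $e\in E$ that do \emph{not} lie on $P$. For such $e$ the whole path $P$ and all of its sub-paths survive in $G-e$, so $d_{G-e}(u,u_i)=i$ and $d_{G-e}(v,u_i)=D-i$ \emph{exactly}, and hence the shortcut $uv$ reduces $u$'s distance to $u_i$ by at least $\max(0,\,2i-D-1)$ even after this deletion. Summing over the path nodes yields $s_e\ge\sum_{i=0}^{D}\max(0,\,2i-D-1)=\Omega(D^2)$, which is precisely the deterministic Fabrikant bound; restricting to off-path deletions is exactly what sidesteps the weak factor-two estimate. For the at most $D$ edges on $P$ I only use $s_e\ge0$. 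To convert this into a bound I need the off-path edges to form a constant fraction of $E$: since $u$ and $v$ are joined by two edge-disjoint paths, each of length at least $D$, one always has $|E|\ge 2D$, hence $|E|-D\ge|E|/2$. Combining, $\sum_{e\in E}s_e\ge(|E|-D)\cdot\Omega(D^2)$ and therefore $dist_G(u)-dist_{G'}(u)=\Omega(D^2)$.

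Finally, because $G$ is a NE, buying $uv$ is not an improving move, so $\alpha\ge dist_G(u)-dist_{G'}(u)=\Omega(D^2)$, which gives $D=\mathcal{O}(\sqrt{\alpha})$. I expect the main obstacle to be exactly the point flagged above: making the shortcut savings robust to the random edge deletion. The device that resolves it cleanly is to account for the gain deletion-by-deletion and then discard the $\mathcal{O}(D)$ on-path deletion scenarios, using $|E|\ge 2D$ to guarantee the discarded scenarios are a minority; the surviving off-path scenarios reproduce the clean deterministic $\Omega(D^2)$ gain verbatim.
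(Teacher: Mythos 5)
Your proof is correct and follows essentially the same route as the paper: buy the shortcut edge $uv$ across a diametral pair, lower-bound the expected gain scenario-by-scenario via the Fabrikant-style $\Omega(D^2)$ savings along the shortest path, and observe that only an $\mathcal{O}(1)$ fraction of deletion scenarios fails to deliver it, whence $\alpha \geq \Omega(D^2)$ at equilibrium. The only (valid) deviation is bookkeeping: you discard the at most $D$ on-path deletion scenarios using $|E|\geq 2D$ from $2$-edge-connectivity, whereas the paper charges every scenario except the deletion of the new edge itself, which costs it the single factor $\tfrac{|E|}{|E|+1}$.
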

\begin{proof}
 We prove the statement by contradiction. Assume that there are agents $u$ and $v$ in network $G$ with $d_G(u,v)\geq 4\ell$, for some $\ell$. Since expected distances cannot be shorter than distances in $G$, it follows that $u$'s expected distance to $v$ is at least $4\ell$. If $u$ buys an edge to $v$ for the price of $\alpha$ then $u$'s decrease in expected distance cost is at least $\frac{|E|}{|E|+1}(4\ell-1+4\ell-3+\cdots+1)= \frac{|E|}{|E|+1}2\ell^2$.  
 
 Thus, if $d_G(u,v)>4\sqrt{\alpha}$, then $u's$ decrease in expected distance cost by buying the edge $uv$ is at least $\frac{|E|}{|E|+1}2\alpha > \alpha$. Thus, if the diameter of $G$ is at least $4\sqrt{\alpha}$, then there is some agent who has an improving move. 
\end{proof}
\noindent Together with Theorem~\ref{thm_diam_social_cost} this yields the following statement:
\begin{corollary}
 The Price of Anarchy of the Adv-NCG is in $\mathcal{O}(\sqrt{\alpha})$.
\end{corollary}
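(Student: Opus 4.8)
The plan is to obtain the bound by simply composing the two preceding results, since the Price of Anarchy is by definition the worst-case ratio $\frac{\maxNE(n,\alpha)}{\OPT(n,\alpha)}$ taken over all $n$ and $\alpha$. First I would fix an arbitrary edge-price $\alpha$ and an arbitrary number of agents $n$, and let $(G,\alpha)$ be a NE network on $n$ nodes whose social cost equals $\maxNE(n,\alpha)$, i.e.\ the worst equilibrium. The immediately preceding theorem guarantees that the diameter $D$ of $(G,\alpha)$ satisfies $D \in \mathcal{O}(\sqrt{\alpha})$, and crucially this bound is uniform in $n$ because that argument only weighs the price of buying a single shortcut edge against the expected distance it saves.

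Next I would invoke Theorem~\ref{thm_diam_social_cost}, which states that for any NE network of diameter $D$ the ratio $\frac{cost(G,\alpha)}{cost(\OPT(n,\alpha))}$ lies in $\mathcal{O}(D)$. Substituting the diameter bound $D \in \mathcal{O}(\sqrt{\alpha})$ from the previous step yields
$$\frac{\maxNE(n,\alpha)}{\OPT(n,\alpha)} = \frac{cost(G,\alpha)}{cost(\OPT(n,\alpha))} \in \mathcal{O}(\sqrt{\alpha}).$$
Since $n$ and $\alpha$ were arbitrary, taking the supremum over all $n$ preserves this bound, so the Price of Anarchy is in $\mathcal{O}(\sqrt{\alpha})$.

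There is essentially no technical obstacle here, as the real work already resides in Theorem~\ref{thm_diam_social_cost} and in the diameter theorem. The one point I would be careful to verify is that the two asymptotic statements chain legitimately: the constant hidden in the diameter bound must not secretly depend on $n$, and likewise the constant in Theorem~\ref{thm_diam_social_cost} (which ultimately rests on Lemma~\ref{lem_diam_increase} and Lemma~\ref{lem_number_of_2_cut_edges}) must be an absolute multiple, so that substituting one $\mathcal{O}$-expression into the other introduces no hidden growth in $n$. Once this uniformity is confirmed, the composition is immediate and the corollary follows.
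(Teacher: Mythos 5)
Your proposal is correct and matches the paper exactly: the corollary is stated there with the single remark that it follows by combining the diameter bound $\mathcal{O}(\sqrt{\alpha})$ with Theorem~\ref{thm_diam_social_cost}, which is precisely the composition you carry out. Your added check that the hidden constants are uniform in $n$ is a reasonable bit of diligence but introduces nothing beyond the paper's argument.
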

\noindent Next, we show how to adapt a technique by Albers et al.~\cite{Al14} to get a stronger statement, which implies constant PoA for $\alpha \in \mathcal{O}(\sqrt{n})$.
\begin{theorem}\label{thm_PoA}
The Price of Anarchy of the Adv-NCG is in $\mathcal{O}\left(1 + \frac{\alpha}{\sqrt{n}}\right)$.
\end{theorem}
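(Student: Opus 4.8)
The plan is to combine the social-cost-versus-diameter estimate of Theorem~\ref{thm_diam_social_cost} with a sharper bound on the diameter of equilibrium networks. Since Theorem~\ref{thm_diam_social_cost} already gives that the PoA is in $\mathcal{O}(D)$ whenever every NE has diameter at most $D$, it suffices to show that every NE network has diameter $D \in \mathcal{O}(1 + \alpha/\sqrt{n})$; plugging this into $\mathcal{O}(D)$ immediately yields the claim. I would first dispose of the easy regime $\alpha \ge n$: here the $\mathcal{O}(\sqrt{\alpha})$ diameter bound shown in the preceding theorem already implies $D \in \mathcal{O}(\sqrt{\alpha}) \subseteq \mathcal{O}(\alpha/\sqrt{n})$, because $\sqrt{\alpha}\le \alpha/\sqrt{n}$ exactly when $\alpha\ge n$. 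Thus the whole difficulty is concentrated in the range $\alpha < n$, where one must establish the much stronger statement that the diameter is $\mathcal{O}(1+\alpha/\sqrt{n})$, in particular constant whenever $\alpha \in \mathcal{O}(\sqrt n)$.

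For this range I would adapt the shortcut-buying argument of Albers et al.~\cite{Al14}. Suppose, for contradiction, that a NE $(G,\alpha)$ has diameter $D$ much larger than $1+\alpha/\sqrt n$, and let $u,v$ be a diametral pair joined by a shortest path $p_0,\dots,p_D$. The naive single-shortcut estimate shows that if $u$ buys the edge $uv$, the distance savings on the path alone are $\Theta(D^2)$; charging this against the edge price $\alpha$ recovers only the weak bound $D\in\mathcal{O}(\sqrt\alpha)$. To gain the extra factor $\sqrt n$ I would instead count \emph{all} nodes whose distance to $u$ drops, not only those on the path: since $u$ and $v$ are diametral, at least $n/2$ nodes lie at distance $\ge D/2$ from one endpoint, so some endpoint, say $u$, has $\Omega(n)$ far-away nodes and hence distance cost $\Omega(nD)$. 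The key step is then to locate a single target $w$ sitting inside a sufficiently dense ball of these far nodes, which must exist by a packing/pigeonhole argument because all $n$ nodes have to be distributed within a graph of diameter $D$. Buying the one edge $uw$ then pulls a whole dense cluster to within $\mathcal{O}(1)$ of $u$, and balancing the resulting gain, which scales like (cluster size)$\times\Theta(D)$, against the per-edge budget $\alpha$, while using that the densest cluster has size $\Omega(\sqrt n)$ in the relevant regime, yields $D\in\mathcal{O}(1+\alpha/\sqrt n)$.

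Two adaptations are needed to carry this from the NCG to the Adv-NCG. First, every quantity above must be phrased in terms of \emph{expected} distances; here Lemma~\ref{lem_diam_increase} is the crucial tool, since deleting any single edge at most doubles distances, so expected and true distances agree up to a constant factor and the improving-move inequalities survive with adjusted constants. Second, I must ensure the postulated deviation is genuinely available to the owning agent: a single-edge purchase always is, and Proposition~\ref{pro1} reduces multi-edge buys to single buys, so no cheaper multi-edge deviation needs separate treatment. I expect the main obstacle to be precisely the packing/averaging step that produces the $\sqrt n$ factor, namely arguing rigorously that the $n$ nodes cannot all hide in sparse, well-separated balls strung along a long diametral path without exposing a profitable shortcut somewhere; translating this clean intuition into a counting argument that remains valid under the expected-distance smoothing is the delicate part.
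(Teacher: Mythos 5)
Your high-level plan coincides with the paper's: invoke Theorem~\ref{thm_diam_social_cost} to reduce the PoA bound to showing that every NE network has diameter $\mathcal{O}(1+\alpha/\sqrt{n})$, and obtain that diameter bound by adapting the argument of Albers et al.~\cite{Al14}, using Lemma~\ref{lem_diam_increase} to control the loss from passing to expected distances. The reduction itself and your dispatch of the regime $\alpha\ge n$ via the $\mathcal{O}(\sqrt{\alpha})$ bound are fine.

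The gap is exactly where you suspect it: the step that produces the factor $\sqrt{n}$. Your sketch asks for a single target $w$ whose constant-radius neighbourhood contains a ``dense cluster'' of $\Omega(\sqrt{n})$ far nodes, so that the one purchase $uw$ already saves $\Omega(\sqrt{n}\cdot D)$. No such cluster need exist: in a long cycle, or in any bounded-degree candidate network, every ball of constant radius contains $\mathcal{O}(1)$ nodes, so the far nodes really can hide in sparse, well-separated balls, and a one-sided density claim cannot be rescued. The paper's argument is two-sided and never exhibits a dense cluster. Let $B$ be the set of nodes at expected distance at most $d'=\lfloor(d-1)/8\rfloor$ from one diametral endpoint $u$. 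First, the opposite endpoint $v$ may buy the edge $vu$, saving $\Omega(d)$ in expected distance to \emph{every} node of $B$; equilibrium forces $\alpha\ge\Omega(d)\,|B|$, i.e.\ $|B|\le\mathcal{O}(\alpha/d)$. Second, every node outside $B$ is reached from $u$ through one of at most $|B|$ gateway nodes at graph distance $d'/2$, so by pigeonhole some gateway $w$ has $\Omega\bigl((n-|B|)/|B|\bigr)$ nodes behind it; $u$ buying $uw$ saves $\Omega(d')$ for each of them, and equilibrium forces $|B|\ge\Omega(dn/\alpha)$. Multiplying the two inequalities gives $\alpha^2\ge\Omega(d^2n)$, hence $d\in\mathcal{O}(1+\alpha/\sqrt{n})$. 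The quantity being balanced is the size of the ball $B$, which may lie anywhere in $[\Omega(dn/\alpha),\mathcal{O}(\alpha/d)]$ and need not concentrate in any small neighbourhood; this two-sided counting is what your packing/averaging step would have to become, and without it the proof does not go through.
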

\begin{proof}
We use Theorem~\ref{thm_diam_social_cost} and give an improved bound on the expected diameter of any NE network. Let the expected diameter of the network be $d$ and consider nodes $u$ and $v$ which have expected distance $d$.

Let $B$ be the set of nodes in the network which are at expected distance of $d'= \left\lfloor\frac{d-1}{8}\right\rfloor$ from node $u$. First, we analyze the change in expected distance cost of agent $v$ if it buys an edge towards $u$. Consider any node $w\in B$. By Lemma~\ref{lem_diam_increase} we have that without edge $\{v,u\}$ agent $v$ has expected distance of at least $\frac{d}{2}-d'$ towards $w$. After buying the edge $\{v,u\}$, $v$'s expected distance to $w$ is at most $\frac{(1+d')|E|+d}{|E|+1}$. Thus, $v$'s expected distance to $w$ decreases by at least $$\frac{d}{2}- d'- \left(\frac{|E| d'+|E|+d}{|E|+1}\right) \geq \frac{d}{2}- 2d'-2 > \frac{d-8}{4}.$$
It follows that by buying the edge $vu$ agent $v$'s expected distance cost decreases by at least $\frac{d-8}{4}|B|$.
Since $G$ is in NE, it follows that $\alpha \geq (\frac{d-8}{4})|B|$.

Now consider node $u$ which has expected distance of at most $d'$ to any node $B$. Thus, by Lemma~\ref{lem_diam_increase}, and since $d_G(u,v) \geq \frac{d}{2}$, we know that there must be nodes $w\in B$ with $d_G(u,w) = \frac{d'}{2}$. Let $D$ denote set of all nodes $w$ in $B$ with $d_G(u,w) \leq \frac{d'}{2}$. For any node $w\in D$ let $$S_w=\{x \mid w \text{ is the last node in $D$ on a shortest path from $u$ to $x$}\}.$$ 
If $S_w$ is non-empty, then $d_G(u,w)$ is  $\frac{d'}{2}$. Since there are $n-|D|$ nodes outside of $D$ it follows, that there must be some node $w$ with $|S_w| \geq \frac{n-|D|}{|D|}$. If $u$ buys the edge $\{u,w\}$, then $u$'s expected distance cost decreases by at least $$\left(\frac{d'}{2}-\frac{|E| + \frac{d'}{2}}{|E|+1}\right)|S_w| \geq \left(\frac{d'}{4}-\frac{1}{2}\right)|S_w|.$$ Since $G$ is an NE, it follows that $\alpha \geq \left(\frac{d'}{4}-\frac{1}{2}\right)|S_w| \geq \left(\frac{d'}{4}-\frac{1}{2}\right)\frac{n-|D|}{|D|}$. By rearranging we get $$
 |D|2\alpha \geq |D|\left(\alpha + \left(\frac{d'}{4}-\frac{1}{2}\right)\right) \geq \left(\frac{d'}{4}-\frac{1}{2}\right)n,$$ where the first inequality holds since $\alpha \geq d > \frac{d'-2}{4}$ because $G$ is in NE. Thus, we have $|B|\geq |D| \geq (d'-2)\frac{n}{8\alpha}$. 
        
From $\alpha \geq (\frac{d-8}{4})|B|$, we get $\alpha \geq \left(\frac{d-8}{4}\right)\left(\frac{d'-2}{8\alpha}\right)n \iff 8\alpha^2 > (\frac{d}{2}-2)(d'-2)n$. Since $\frac{d}{2} > d'$ we have $$8\alpha^2 \geq (d'-2)^2 n \iff \sqrt{\frac{8}{n}}\alpha \geq (d'-2) \geq (\frac{d-1}{8}-3).$$ Hence, we have $25 + \frac{8\sqrt{8}\alpha}{\sqrt{n}} \geq d.$         
\end{proof}

\begin{theorem}\label{thm_PoAlower}
 The Price of Anarchy of the Adv-NCG is at least $2$ and for very large $\alpha$ this bound is tight.
\end{theorem}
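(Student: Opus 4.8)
The plan is to prove the lower bound by exhibiting a concrete family of equilibria whose social cost is nearly twice that of the optimum, and to prove tightness for very large $\alpha$ by showing that in that regime no equilibrium can exceed (essentially) twice the optimum.

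For the lower bound I would take the double star $DS_n$ as the bad equilibrium and the cycle $C_n$ as a cheap benchmark. By Theorem~\ref{thm_existence}, $DS_n$ is a NE whenever $\alpha \geq 1-\frac{1}{2n-1}$, hence certainly for large $\alpha$; since every edge of $DS_n$ is doubled, no single deletion changes any distance, so $cost(DS_n) = 2(n-1)\alpha + 2(n-1)^2$ exactly, as already recorded in the proof of Theorem~\ref{thm_PoS}. On the other side $C_n$ is a valid $2$-edge-connected network, so $\OPT(n,\alpha) \leq cost(C_n)$; deleting any cycle edge yields a path $P_n$ with Wiener index $\binom{n+1}{3}$, whence $dist(C_n) = \Theta(n^3)$ and $cost(C_n) = n\alpha + \Theta(n^3)$. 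Since $DS_n$ is a NE, $\maxNE(n,\alpha)/\OPT(n,\alpha) \geq cost(DS_n)/cost(C_n)$, which for fixed $n$ tends to $\frac{2(n-1)}{n}$ as $\alpha \to \infty$; letting $n \to \infty$ and taking the supremum over all $n$ and $\alpha$ gives $\mathrm{PoA} \geq 2$.

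For tightness at very large $\alpha$ the key step is to show that, once $\alpha$ exceeds a $\Theta(n^2)$ threshold, every NE network consists \emph{only} of $2$-cut-edges. I would reuse the deletion computation from the proof of Theorem~\ref{thm_diam_social_cost}: for any non-$2$-cut-edge $e$ owned by an agent $v$, removing $e$ raises $v$'s expected distance cost by at most $(|V_e|+1)4D \leq 4Dn$, so in equilibrium $\alpha \leq 4Dn$. Feeding in the diameter bound $D < 4\sqrt{\alpha}$ from the $\mathcal{O}(\sqrt{\alpha})$-diameter theorem forces $\alpha < 256\,n^2$ whenever a non-$2$-cut-edge is present; contrapositively, for $\alpha > 256\,n^2$ no agent owns a non-$2$-cut-edge. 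By Lemma~\ref{lem_number_of_2_cut_edges} the network then has at most $2(n-1)$ edges, so $edge(G,\alpha) \leq 2(n-1)\alpha$, while Lemma~\ref{lem_diam_increase} bounds every expected distance by $2D$, giving $dist(G) \leq 2Dn^2 < 8\sqrt{\alpha}\,n^2$. Combining this with $\OPT(n,\alpha) \geq n\alpha$ (every optimum is $2$-edge-connected, hence has at least $n$ edges) yields
$$\frac{\maxNE(n,\alpha)}{\OPT(n,\alpha)} \leq \frac{2(n-1)\alpha + 8\sqrt{\alpha}\,n^2}{n\alpha} = \frac{2(n-1)}{n} + \frac{8n}{\sqrt{\alpha}},$$
which is below $2$ as soon as $\alpha \geq 16\,n^4$ and tends to $\frac{2(n-1)}{n}$ as $\alpha \to \infty$. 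Matching this with the lower bound of the previous paragraph shows the bound of $2$ is tight for very large $\alpha$.

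I expect the crux to be the upper-bound direction, specifically making the ``all edges are $2$-cut-edges'' claim airtight: it requires substituting $D = \mathcal{O}(\sqrt{\alpha})$ back into the deletion inequality $\alpha \leq 4Dn$ to extract an explicit threshold on $\alpha$, and then verifying that the residual distance-cost term $\mathcal{O}(\sqrt{\alpha}\,n^2)$ is genuinely of lower order than the edge cost $\Theta(n\alpha)$ so that the ratio stays at or below $2$. The lower-bound direction is comparatively routine once the exact cost of $DS_n$ and the $\Theta(n^3)$ distance cost of $C_n$ are in hand.
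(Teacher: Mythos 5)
Your proposal is correct and follows essentially the same route as the paper: the double star $DS_n$ measured against the cycle $C_n$ for the lower bound, and the $2$-cut-edge count of Lemma~\ref{lem_number_of_2_cut_edges} for tightness at large $\alpha$. The paper's tightness direction is only a one-line sketch, so your quantitative derivation that every edge of a NE must be a $2$-cut-edge once $\alpha \in \Omega(n^2)$ (combining the deletion inequality from the proof of Theorem~\ref{thm_diam_social_cost} with the $\mathcal{O}(\sqrt{\alpha})$ diameter bound) fills in details the paper leaves implicit, but it is not a different approach.
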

\begin{proof}
 Consider an arbitrary large $\alpha$, e.g. $\alpha  = 2^n$. In that case the optimum network must be a cycle whereas, by Theorem~\ref{thm_existence}, the network $DS_n$ is in Nash Equilibrium for this $\alpha$. Since $DS_n$ has $2(n-1)$ edges and since in this range of alpha the edge-cost term dominates the social cost, the lower bound follows. The tight upper bound for large $\alpha$ follows from Lemma~\ref{lem_number_of_2_cut_edges}, since $2$-cut-edges cannot be deleted without creating a bridge. 
\end{proof}
\section{Conclusion}
Our work is the first step towards incorporating both centrality and robustness aspects in a simple and accessible model for selfish network creation. In essence we proved that many properties and techniques can be carried over from the non-adversarial NCG and we indicated that the landscape of optimum and equilibrium networks in the Adv-NCG is much more diverse than without adversary. As for the NCG, proving strong upper or lower bounds on the PoA is very challenging. Especially surprising is the hardness of constructing higher lower bounds than in the NCG since by introducing suitable gadgets it is always possible to enforce that no agent wants to swap or delete edges. A non-constant lower bound on the PoA seems possible if $\alpha$ is linear in $n$. 

It would also be interesting to consider different adversaries. An obvious candidate for this is node-removal at random. Another promising choice is a local adversary, where every agent considers that some of its incident edges may fail. This local perspective combined with a centrality aspect could explain why many selfishly built networks have a high  clustering coefficient. 

Another direction is to consider the swap version~\cite{MS10,Ehs15} of the Adv-NCG, especially in the case where all agents own at least $2$ edges. We note in passing, that the swap-version of the Adv-NCG is not a potential game. The following improving move cycle shows that even if agents are only allowed to perform multi-swaps, then infinite sequences of improving moves are possible.
\begin{figure}[h!]
 \centering
 \includegraphics[width=\textwidth]{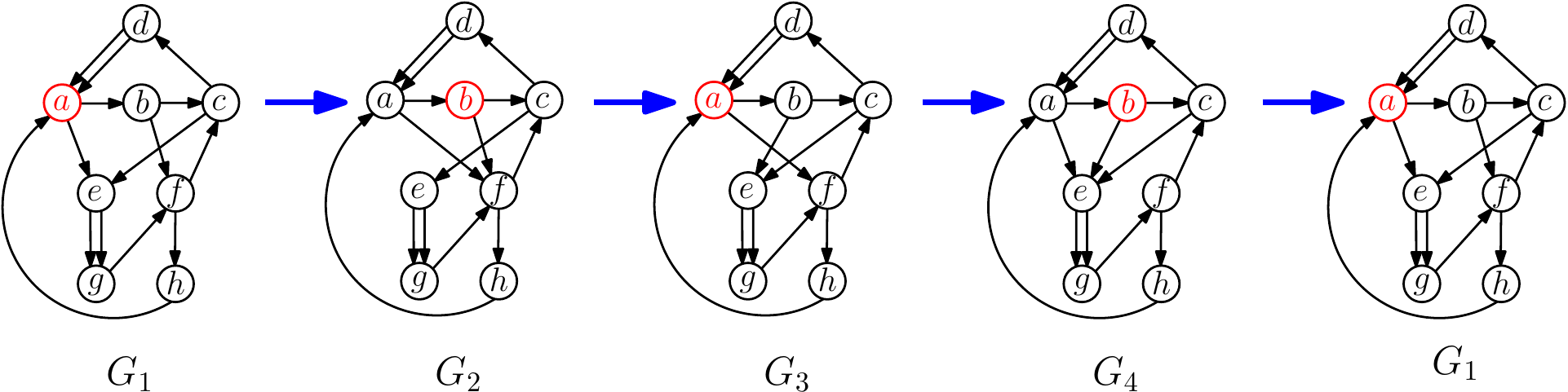}
 \caption{Improving Move Cycle for Swap Version}
 \label{fig:swap_IRC}
\end{figure}

\noindent Moreover creating equilibrium networks having diameter $4$ is already very challenging.

\bibliographystyle{abbrv}
\bibliography{robust_NCG}

\newpage
\appendix
\section{Appendix}
\begin{theorem}Given graph $G(V,E)$ \label{Thm_Hardness}
\begin{itemize}
\item[1.]  It is $NP$-hard to compute $\min12cds$.
\item[2.] It is $W[2]$-hard to compute $\min12cds$ parameterized by solution set $S$.
\end{itemize}
\end{theorem}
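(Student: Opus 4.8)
The plan is to obtain both hardness results from a single reduction, namely from \textsc{Dominating Set}: given a graph $H=(V_H,E_H)$ and an integer $\ell$, decide whether $H$ has a dominating set of size at most $\ell$. This problem is NP-hard and, parameterized by the solution size $\ell$, is the canonical $W[2]$-complete problem, so one suitably designed (FPT) reduction simultaneously yields Part~1 and Part~2. The guiding idea is that a single \emph{universal} vertex plays a double role: it makes any selected set induce a connected subgraph for free, and it supplies one of the two dominators that the $2$-domination requirement demands, so that the remaining requirement degenerates to ordinary domination.

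Concretely, from an instance $(H,\ell)$ I would build the graph $G$ by adding to $H$ one universal vertex $a$ joined to every vertex of $V_H$, and one pendant vertex $p$ whose unique neighbor is $a$. This is computable in polynomial time, and it maps the parameter $\ell$ to $\ell+2$. I would then prove the equivalence: $H$ admits a dominating set of size at most $\ell$ if and only if $G$ admits a $\min12cds$ of size at most $\ell+2$, by showing more precisely that the minimum $\min12cds$ of $G$ equals the minimum dominating set of $H$ plus $2$.

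For the correctness argument I would proceed in three steps. First, $p$ has degree $1$, hence it has at most one neighbor in any set $S$ and can never be $2$-dominated; therefore $p\in S$, and since $S$ must induce a connected subgraph, its only neighbor $a$ must also lie in $S$. Second, once $a\in S$, every vertex $v\in V_H\setminus S$ already has $a$ as one neighbor in $S$, and because $p$ is adjacent only to $a$, the second required neighbor of $v$ must be an $H$-neighbor inside $D:=S\cap V_H$; this says exactly that $D$ is a dominating set of $H$, so $|S|\ge |D|+2\ge \mathrm{mds}(H)+2$. Third, for the converse, any dominating set $D$ of $H$ yields $S=D\cup\{a,p\}$, which is connected (all of $D$ and $p$ hang off $a$) and $2$-dominates every outside vertex (via $a$ and a neighbor in $D$); hence $\mathrm{mds}(H)+2$ is achievable. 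Together these give the claimed exact relation.

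Part~1 then follows since the reduction is polynomial and \textsc{Dominating Set} is NP-hard, and Part~2 follows since the reduction is an FPT reduction (the new parameter $\ell+2$ depends only on $\ell$) and \textsc{Dominating Set} parameterized by solution size is $W[2]$-hard. I expect the main obstacle to be the case analysis pinning down that $a$ and $p$ are \emph{forced} into every feasible solution and that no vertex of $V_H$ can acquire its second dominator from outside $V_H$: these are precisely the points that guarantee the $2$-domination constraint collapses to plain domination and that no degenerate or ``cheaper'' solution of $G$ can undercut $\mathrm{mds}(H)+2$.
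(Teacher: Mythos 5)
Your proposal is correct and takes essentially the same route as the paper: a polynomial-time reduction from \textsc{Dominating Set} that adds a universal vertex, which supplies both the connectivity of the solution and one of the two required dominators for every outside vertex, and which, being an FPT reduction, yields NP-hardness and $W[2]$-hardness simultaneously. The only real difference is your additional pendant vertex $p$, which forces the universal vertex into every feasible solution and thereby replaces the paper's case analysis (on whether the universal vertex lies in the optimal $(1,2)$-CDS) with a cleaner exact identity $\mathrm{opt}(G)=\mathrm{mds}(H)+2$.
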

\begin{proof}

Let  $V=\{v_1,v_2,...,v_n\}$ and $E$  vertex set of a graph $G(V,E)$,  produce an new instance $G'(V',E')$ where, $V'=V\cup u$ and $E'=E\bigcup_{i\in [n]}\{v_i,u\}$.
\begin{itemize}
\item[(1)] We prove hardness by giving polynomial time reduction to \textsc{Dominating-Set} problem which is well known NP-complete problem. 

Now consider the $\min12cds$ $S$ of $G'$. We have $|S| = |D|+1$, where $D$ is a minimum dominating set of $G$. This holds, since if $|S|> |D|+1$, then we could simply replace $S$ by $D\cup \{u\}$ and obtain a smaller $\min12cds$. If $|S|<|D|+1$, then, since in $S$ every node is connected to two dominating nodes and since $S$ is connected, we can delete an element from $S$ to obtain a dominating set $D'$ with $|D'|<|D|$.  

Now consider $G'$ there will can be only two possibilities for the solution set $S$ of $\min12cds$ of $G'$:

\begin{itemize}
\item[(a)] If $S$ contains $u$, then $S \setminus\{u\}$ is a minimum dominating set of $G$.
\item[(b)] If $S$ does not contain $u$, then we can delete any element from $S$ to obtain a minimum dominating set of $G$.
\end{itemize} 
\item[(2)] We prove $W[2]$ hardness of the $\min12cds$ parameterized by solution set $S$ by giving a $fpt$-reduction to parameterized \textsc{Dominating-Set} which is known to be $W[2]$-complete. 

For the reduction, create input instance $(G',k+1)$ for $\min12cds$. From the proof of (1) we know that $S$ always contains a dominating set of $G$. Thus if there exist a $\min12cds$ of size at most $k+1$ then there exist a dominating set in $G$ of size at most $k$. 
\end{itemize}
\end{proof}

\end{document}